\DeclareMathAlphabet{\pazocal}{OMS}{zplm}{m}{n}
\newcommand\myshade{85}
\colorlet{mylinkcolor}{BrickRed}
\colorlet{mycitecolor}{NavyBlue}
\colorlet{myurlcolor}{Aquamarine}
\DeclareMathOperator*{\argmax}{arg\,max}
\DeclareMathOperator*{\argmin}{arg\,min}
\newcommand{\tE}{\tilde{\textbf{E}}}
\newcommand{\tM}{\tilde{\textbf{M}}}
\begin{document}
\title{A generalization of the maximum entropy principle for curved statistical manifolds} 
\author{Pablo A. Morales}
\affiliation{Research Division, Araya Inc., 
Tokyo 107-6019, Japan}
\email{pablo$_$morales@araya.org}
\author{Fernando E. Rosas}
\affiliation{Data Science Institute, Imperial College London, London SW7 2AZ, UK}
\affiliation{Centre for Psychedelic Research, Department of Brain Science, Imperial College London, London SW7 2DD, UK}
\affiliation{Centre for Complexity Science, Imperial College London, London SW7 2AZ, UK}

\newtheorem{definition}{Definition}
\newtheorem{theorem}{Theorem}
\newtheorem{lemma}{Lemma}
\newtheorem{proposition}{Proposition}
\newtheorem{corollary}{Corollary}
\newtheorem{example}{Example}
\newtheorem{remark}{Remark}

\begin{abstract}
The maximum entropy principle (MEP) is one of the most prominent methods to investigate and model complex systems. Despite its popularity, the standard form of the MEP can only generate Boltzmann-Gibbs distributions, which are ill-suited for many scenarios of interest. 
As a principled approach to extend the reach of the MEP, this paper revisits its foundations in information geometry and shows how the geometry of curved statistical manifolds naturally leads to a generalization of the MEP based on the R\'enyi entropy. By establishing a bridge between non-Euclidean geometry and the MEP, our proposal sets a solid foundation for the numerous applications of the R\'enyi entropy, and enables a range of novel methods for complex systems analysis.

\end{abstract}

\maketitle

\section{Introduction}

The progressive unveiling of the intricate connections that exists between information theory and statistical mechanics has allowed fundamental advances on our understanding of complex systems~\cite{thurner2018introduction}. 
One of the most important methods resulting from those discoveries is the \emph{maximum entropy principle} (MEP), which unifies multiple results and procedures under a single heuristic that operationalizes Occam's razor~\cite{jaynes1957information,jaynes2003probability}. From a pragmatic perspective, the MEP can be understood as a modeling framework that is particularly well-suited for building statistical descriptions of a broad class of systems in contexts of incomplete knowledge~\cite{cofre2019comparison}. The high versatility of the MEP has allowed it to find applications in a wide range scenarios, including the analysis of DNA motifs of transcription factor binding sites~\cite{santolini2014general}, co-variations in protein families and amino acid contact prediction~\cite{weigt2009identification,morcos2011direct}, diversity of antibody repertoires in the immune system~\cite{mora2010maximum,elhanati2014quantifying}, coordinated firing patterns of neural populations~\cite{schneidman2006weak,tang2008maximum,marre2009prediction,cofre2014exact}, collective behavior of bird flocks and mice~\cite{bialek2012statistical, cavagna2014dynamical,shemesh2013high}, the abundance and distribution of species in ecological niches~\cite{harte2011maximum,harte2014maximum}, and patterns of behavior in various complex human endeavours~\cite{lee2018partisan,lynn2019surges}.

The efficacy of the MEP rests on Shannon's entropy, which acts as as an estimate of ``uncertainty'' that guides the modeling procedure. Colloquially, the MEP generates the statistical model that is less structured while being consistent with the available knowledge,  building on the available knowledge but nothing else. However, the functional form of the Shannon entropy greatly restricts the range outputs that the MEP can offer. In particular, standard applications of the MEP can only generate Boltzmann-Gibbs distributions, which are unsuitable to describe complex systems displaying long-range correlations or other effects related to different types of statistics~\cite{tsallis2005asymptotically,jizba2004world,thurner2007entropies,tsallis2009introduction}. This important limitation have triggered various efforts to generalize the MEP by means of leveraging generalizations of Shannon's entropy, resulting in a rich array of proposals (see e.g.~\cite{tsallis1988possible,beck2003superstatistics,jizba2004observability,hanel2012generalized,jeldtoft2018group}). However, we argue that plugging a generalized entropy into the MEP framework inevitably leads to an adhoc procedure whose value is fundamentally hindered by the heuristic nature of the MEP itself.

An alternative approach to extend the MEP is to consider it not as a stand-alone principle, but as a consequence of deeper mathematical laws. One route to do this --- that we follow in this paper --- is to regard the MEP as a direct consequence of the 
geometry of statistical manifolds~\cite[Sec.III-D]{amari2001information}. In effect, by leveraging the structure of dual orthogonal projections allowed by the flat geometry associated with the Kullback–Leibler  divergence~\cite{amari2000methods,amari2016information}, the seminal work of Amari established how 
the standard MEP naturally emerges when considering hierarchical ``foliations'' of the manifold. This perspective not only sets the MEP on a firm mathematical bases, but further endows it with sophisticated tools from information geometry --- which can be used e.g. to disentangle the relevance of interactions of different orders within the system~\cite{schneidman2003network,olbrich2015information,rosas2016understanding}.

In this paper we show how the geometry of curved statistical manifolds naturally leads to an extension of the MEP based on the R\'enyi entropy. 
In contrast to flat cases, the geometrical structure of curved statistical manifolds disrupts the standard construction of orthogonal projections based on Legendre-dual coordinates, making the analysis of foliations highly non-trivial. Nonetheless, by leveraging the rich literature on curved statistical manifolds~\cite{amari2000information,Kurose2002,matsuzoe2010statistical,amari2016information,wong2018logarithmic,scarfone2020study}, the framework put forward in this paper reveals how the geometry established by the R\'enyi divergence is suitable for establishing hierarchical foliations that, in turn, lead to a generalization of the MEP.

The results presented in this paper serve to emphasize the special place that the R\'enyi entropy has among other generalized entropies --- at least from the perspective of the MEP. 
Furthermore, it provides a solid mathematical foundation for the plethora of existent applications based on the R\'enyi entropy (see e.g. Refs.~\cite{shalymov2016dynamics,bashkirov2004maximum,GeometricMutInf,DetectingPhaseTwithRenyi}). Furthermore, the novel connection established between information geometry and this generalized MEP opens the door for fertile explorations combining non-Euclidean geometry methods and statistical analyses, which may lead to new insights and techniques to further deepen our understanding of complex systems.

The rest of this article is structured as follows. First, Section~\ref{sec:alpha} provides a brief introduction to information geometry, emphasising concepts that are key to our proposal. Then, Section~\ref{sec:hiearchical} develops the analysis of foliations in curved statistical manifolds, and Section~\ref{sec:maxentRenyi} establishes its relationship with a maximum R\'enyi entropy principle. Finally, Section~\ref{sec:discussion} discusses the implications of our findings and summarizes our main conclusions.

\section{Preliminaries}
\label{sec:alpha}

\subsection{The Dual Structure of Statistical Manifolds}
\label{sec:prel1}

Our exposition is focused on statistical manifolds $\mathcal{M}$, whose elements are probability distributions $p_\xi(x)$ with $x\in\boldsymbol\chi$ and $\xi\in\mathbb{R}^d$. 
The geometry of such statistical manifolds 
is determined by two structures: a metric tensor $g_p$, 
and a torsion-free affine connection 
pair $(\nabla, \nabla^{*})$ that are dual with respect to $g_p$. Intuitively, $g_p$ defines norms and angles between tangent vectors and, in turn, establishes curve length and the \emph{shortest} curves. 
On the other hand, the affine connection establishes contravariant derivatives of vector fields establishing the notion of parallel transportation between neighbouring tangent spaces, which defines what is a \emph{straight} curve.

Traditional Riemannian geometry is build on the assumption that the shortest and the straightest curves coincide, which led to the study of metric-compatible (Levi-Civita) connections ---  pivotal to the development of the theory of general relativity. However, modern approaches motivated in information geometry~\cite{amari2021information} and 
gravitational theories~\cite{Vitagliano:2010sr,Vitagliano:2013rna} consider more general cases, where the metric and connections are independent from one another. In such geometries, the parallel transport operator $\Pi:T_p\mathcal{M}\to T_q\mathcal{M}$ and its dual $\Pi^*$~\footnote{The dual transport operator acts on cotangent vectors, and is defined by the condition of guaranteeing
$g_q (\Pi V,\Pi^{*} W)=g_p (V,W)$ for all $W\in T_p\mathcal{M}$ and $V\in T^*_p\mathcal{M}$).} (induced by $\nabla$ and $\nabla^*$, respectively) might differ. 
The departure of $\nabla$ and $\nabla^*$ from self-duality can be shown to be proportional to Chentsov's tensor, which allows for a single degree of freedom traditionally denoted by $\alpha \in \mathbb{R}$~\cite{amari2021information}. Put simply, $\alpha$ captures the degree of asymmetry between short and straight curves, with $\alpha=0$ corresponding to metric-compatible connections where $\nabla=\nabla^*$.

An important property of the geometry of a statistical manifold ($\mathcal{M},g,\nabla, \nabla^{*}$) is its curvature, which can be of two types: the (Riemann-Christoffel) metric curvature, or the curvature associated to the connection. Both quantities capture the distortion induced by parallel transport over closed curves --- the former with respect to the Levi-Civita connection, and the latter with respect to $\nabla$ and $\nabla^*$. In the sequel we use the term ``curvature'' to refer exclusively to the latter type.

\subsection{Establishing geometric structures via divergences} \label{sub:estab_gvd}

A convenient way to establish a geometry on a statistical manifold is via \emph{divergence maps}~\cite{amari2010information}. 
Divergences are smooth,
distance-like mappings for the form $\pazocal{D}:\mathcal{M}\times \mathcal{M} \to \mathbb{R}$, 
which satisfy $\pazocal{D}(p||q)\ge 0$ and vanish only when $p=q$~\footnote{Divergences are in general weaker than distances, as they don't need to be symmetric in their arguments and don't need to respect the triangle inequality.}. We use the shorthand notation $\pazocal{D}[\xi;\xi'] := \pazocal{D}(p_{\xi}||q_{\xi'})$ when expressing $\pazocal{D}$ under a parametrization of $\mathcal{M}$ in terms of coordinates $\xi=(\xi^1,\dots,\xi^n)$~\cite{amari2001information}; divergences in this form are often called ``contrast functions'' (see Ref.~\cite[Sec.~11]{calin2014geometric}). 

Let us see how one can naturally build a metric from a contrast function~\cite[Sec.~4]{amari2010information}. A metric $g(\xi)$ can be built from the second-order expansion of the divergence $\pazocal{D}$ as
\begin{equation}
\label{eq:FisherMetric}
  g_{ij}(\xi) 
  = \left \langle \partial_{\xi^i},
  \partial_{\xi^j} \right \rangle 
  = - \partial_{\xi^i , \xi^{\prime j}}
  \pazocal{D}[\xi;\xi'] \big|_{\xi=\xi'}~,
\end{equation}
which is positive-definite due to the non-negativity of $\pazocal{D}$. 
This construction leads to the \emph{Fisher's metric}, which is the unique metric that emerges from a broad class of divergences~\cite[Th.~5]{amari2010information}, with this being this closely related Chentsov’s theorem~\cite{chentsov1982statistical,ay2015information,van2017uniqueness,dowty2018chentsov}.
Analogously, connections (or equivalently Christoffel symbols) emerge at the third order expansion of the
divergence as follows:
\begin{subequations}
\label{eq:connections}
\begin{align}
\label{eq:C1}
  \Gamma_{ijk}(\xi) & = \left \langle \nabla_{\partial_{\xi^i}}
  \partial_{\xi^j} ,\partial_{\xi^k} \right \rangle 
  = -\;\left . \partial_{i ,j} \partial_{k'} \pazocal{D}[\xi;\xi'] \right|_{\xi=\xi'}\!,\\
\label{eq:C2}
  \Gamma_{ijk}^{*}(\xi) & = \left \langle \nabla_{\partial_{\xi^i}}^{*}
  \partial_{\xi^j} ,\partial_{\xi^k} \right \rangle 
  = -\left . \partial_{k} \partial_{i',j'} \pazocal{D}[\xi;\xi'] \right|_{\xi=\xi'}\!,
\end{align}
\end{subequations}
where the shorthand notation $\partial_{\xi^i} =\partial_{i}$ and $\partial_{\xi'^i} =\partial_{i'}$ has been adopted for brevity.
In summary, Fisher's metric is insensible the choice of divergence but the resulting connections are,
and therefore the effects of a particular $\pazocal{D}$ manifest only at third-order. 
Interestingly, this construction relating the metric and connections with
the second and third derivatives of a scalar potential bears a striking resemblance to K{\"a}hler structures on complex manifolds, which can be built through further constraints and are applicable to a range of inference problems~\cite{Choi_2015,zhang2013symplectic}.

The approach of building geometries based on divergences does not lack generality, as it has been shown that any geometry can be expressed by an appropriate divergence~\cite{matumoto1993any,ay2015novel}. 
Of the various types of divergences explored in the literature (c.f. \cite{Liese2006Divergences} and references within), two classes are particularly important: \emph{$f$-divergences} of the form
\begin{equation}
      \pazocal{D}_{f}[\xi;\xi'] = \int_{\boldsymbol\chi} p_{\xi}(x) f\left( \frac{p_{\xi}(x)}{q_{\xi'}(x)} \right)
    d\mu(x)
\end{equation}
for $f(x)$ convex with $f(1)=0$, and \emph{Bregman divergences} of the form
\begin{align}
      \pazocal{D}_{\phi}[\xi;\xi'] 
      &= (\xi-\xi')\cdot \mathrm{D}\phi(\xi') - \big(\phi(\xi) - \phi(\xi')\big) \\
      &= \xi\cdot\eta' - \phi(\xi) - \psi(\eta')\label{eq:legendre}
\end{align}
for $\phi(\xi)$ a concave function~\footnote{Following Ref.~\cite{wong2018logarithmic}, we use a non-standard definition of Bregman divergences based on concave (instead of convex) functions.}, with $\mathrm{D}\phi=(\partial\phi/\partial\xi_1,\dots,\partial\phi/\partial\xi_d)$ denoting the gradient of $\phi$, $\psi(\eta)=\min_\xi\big(\eta\cdot\xi-\phi(\xi)\big)$ is the Fenchel–Legendre concave conjugate of $\phi$, and $\eta$ the dual coordinates of $\xi$ such that
\begin{equation} \label{eq:scalar_pot}
    \xi = \mathrm{D}\psi(\eta) 
    \quad\text{and}\quad
    \eta = \mathrm{D}\phi(\xi)~.
\end{equation}
Each of these types of divergences have important properties from an information geometry perspective: $f$-divergences are monotonic with respect to coarse-grainings of the domain of events $\boldsymbol\chi$, while Bregman divergences enable dual structures that set the basis for orthogonal projections~\cite{Amaridivergences}. 

As mentioned above, the deviation of a given connection $\nabla$ from its corresponding metric-compatible (i.e. Levi-Civita) counterpart can be measured by $\alpha T$, where $T$ corresponds to the invariant \textit{Amari-Chensov} tensor~\cite{cencov2000statistical,amari1982differential} and $\alpha \in \mathbb{R}$ is a free parameter. The invariance of $T$ implies that the value of $\alpha$ entirely determines the connection, and the corresponding geometry can be obtained from a divergence of the form
\begin{align}
\pazocal{D}_{\alpha}(p || q)&= \frac{4}{1-\alpha^2} \int_{\boldsymbol\chi}
  \left \{ 1 - p^{\frac{1-\alpha}{2}}q^{\frac{1+\alpha}{2}} \right \}d\mu(x)~,
\end{align}
which is known as \emph{$\alpha$-divergence}. As important particular cases, if $\alpha=0$ then $\pazocal{D}_{\alpha}$ becomes the square of Hellinger's distance, and if $\alpha=1$ then it gives the well-known Kullback-Leibler
\begin{equation}
  \pazocal{D}_{\mathrm{KL}}(p||q) = \int_{\boldsymbol\chi} p(x) \log \left( \frac{p(x)}{q(x)} \right)
  d\mu(x)~.
\end{equation}

It is worth noting that geometrical structures are invariant under certain types of transformations. 
For example, consider a divergence $\tilde{\pazocal{D}}$ given by  $\tilde{\pazocal{D}}[\xi;\xi']:=F(\pazocal{D}[\xi;\xi'])$, with $F$ a monotone and differentiable function satisfying $F(0)=0$
~\footnote{Given two divergences $\pazocal{D}$ and $\tilde{\pazocal{D}}$, there is always a function $F:\mathbb{R}^3\to\mathbb{R}$ such that 
$\tilde{\pazocal{D}}[\xi;\xi']=F(\pazocal{D}[\xi;\xi'],\xi,\xi)$. Building on this fact, one can consider three levels of similarity: (i) when $F$ depends only on the first argument --- which then implies the corresponding geometries are essentially the same, (ii) when $F$ can be expressed as $F(x,y,z) = f(x) g(y,z)$ --- which implies conformal-projective equivalence (see Sec.~\ref{sec:proj}, and (iii) the more general case.}. 
Then, it can be shown using Eqs.~\eqref{eq:FisherMetric} and \eqref{eq:connections} that the metric and connections induced by $\pazocal{D}$ and $\tilde{\pazocal{D}}$ are related as follows:
\begin{equation}
\label{eq:class_def}
\tilde{g}=F'(0)\, g,
\quad
\tilde{\Gamma} = F'(0)\,\Gamma,
\quad 
\tilde{\Gamma}^{*} = F'(0)\,\Gamma^{*}~.
\end{equation}
Therefore, $\tilde{\pazocal{D}}$ gives rise to exactly the same geometrical structure when $F'(0)=1$, and a scaled version otherwise. More general transformations between divergences and their corresponding geometries are discussed in Section~\ref{sec:proj}.

\subsection{A Pythagorean relationship in curved spaces via the R\'enyi divergence}\label{sec:IIc}

The connection induced by the KL divergence and its natural coordinates is flat (i.e. $\Gamma_{ijk}(\xi)=\Gamma^*_{ijk}(\xi)=0$). 
However, this does not hold for $\alpha$-divergences when $\alpha \neq1$, which retain the same Fisher's metric but induce a connection with constant sectional curvature $\omega=(1-\alpha^2)/4$ over the whole manifold~\cite[Theorem~7]{wong2018logarithmic}. 
This results into a spherical ($S^n$) geometry for $\alpha \in (0,1)$, or an hyperbolic ($H^n$) geometry for $\alpha \notin (0,1)$.

A non-zero curvature affects the relationship between  geodesics~\footnote{Geodesics are the straight curves established by the connection, which in non-Riemannian geometries are not the same as the shortest curves between two points.}: if the ``$\alpha$-geodesic'' joining $p$ and $q$ is orthogonal (with respect to the Fisher metric) to the one 
joining $q$ and from $r$, then
\begin{align}
\label{eq:Add_prop}
  \pazocal{D}_{\alpha}(p || r) = & \, \pazocal{D}_{\alpha}(p || q) + \pazocal{D}_{\alpha}(q || r) \nonumber \\  & - \frac{1-\alpha^2}{4}\pazocal{D}_{\alpha}(p || q)\pazocal{D}_{\alpha}(q || r)~,
\end{align}
resulting in a deviation from the standard ``Pythagorean relationship''
that is observed for the case of $\alpha=1$~\cite{amari2000methods}. 
However, one can rewrite Eq.~(\ref{eq:Add_prop}) as
\begin{equation}
\label{eq:divsphere}
  1-\omega \pazocal{D}_{\alpha}(p || r) = \big(1-\omega \pazocal{D}_{\alpha}(p || q)\big) 
  \big(1-\omega \pazocal{D}_{\alpha}(q || r)\big),
\end{equation}
which describes the relationship between angles on the sphere or hyperbolic space --- depending on the sign of $\omega$~\cite{amari2000methods}. 
Interestingly, Eq.~\eqref{eq:divsphere} suggests that a divergence of the form
\begin{align} 
\label{eq:Renyialpha}
    \mathcal{D}_{\gamma}(p || q) 
    :=& \frac{1}{\gamma} \log\big(1 + \gamma(1+\gamma)\pazocal{D}_{\alpha} (p||q)\big) \\
    =& \frac{1}{\gamma} \log\int_{\boldsymbol\chi} p(x)^{\gamma + 1}q(x)^{-\gamma} d\mu(x)  \label{eq:Renyi_div}
\end{align}
with $\alpha=-1-2\gamma$ would recover the ``Pythagorean relationship.'' 
In fact, $\mathcal{D}_{\gamma}$ can be recognized as the well-known R\'enyi divergence of order $\gamma-1$~\cite{wong2018logarithmic,amari2021information}, noting that we follow Ref.~\cite{valverde2019case} in adopting a shifted indexing.

The R\'{e}nyi divergence is an $f$-divergence with $f(x)=x^{\gamma}$ but it is not a Bregman divergence; however, one can re-cast it as a ``Bregman-like'' divergence~\cite{wong2018logarithmic}. To see this, let's consider $\tilde{p}_\xi \in\mathcal{M}$ to be a deformed exponential family distribution of the form (see Appendix~\ref{app:def_exp_dist})
\begin{equation}
\label{eq:def_exp}
  \tilde{p}_\xi(x) = \big(1+ \gamma \xi \cdot h(x)\big)^{-\frac{1}{\gamma}} e^{\varphi_{\gamma}(\xi)}~,
\end{equation}
where $h(x)\in\mathbb{R}^d$ is a vector of sufficient statistics of $x$ and $\varphi_{\gamma}$ is a normalising potential given by
\begin{equation}\label{eq:def_varphi}
  \varphi_{\gamma}(\xi) := -\log \int_{\boldsymbol \chi} (1+\gamma \xi \cdot h(x))^{-\frac{1}{\gamma}}d\mu(x)~.
\end{equation}
Note that Eq.~\eqref{eq:def_exp} gives a standard exponential family distribution when $\gamma\to0$. By defining $\mathcal{D}_\gamma[\xi;\xi']  :=\mathcal{D}_\gamma(\tilde{p}_\xi || \tilde{p}_{\xi'})$ to be the corresponding contrast function of the R\'enyi divergence, then one can show that~\cite[Th.13]{wong2018logarithmic}
\begin{equation}\label{eq:def_bigd}
    \mathcal{D}_\gamma[\xi;\xi'] 
    = \frac{1}{\gamma} \log(1 + \gamma \xi\cdot \eta^{\prime})
    - \varphi_{\gamma}(\xi) - \psi_{\gamma}(\eta')~,
\end{equation}
which resembles Eq.~\eqref{eq:legendre} but with the factor $\xi \cdot \eta$ replaced by a logarithm. Above, 
\begin{equation}
    \psi_{\gamma}(\eta) :=
    \min_{\xi} \Big\{ \frac{1}{\gamma} \log(1+\gamma \xi\cdot \eta) - \varphi_{\gamma}(\xi)\Big\}
\end{equation}
is a generalization of the Fenchel–Legendre transform of $\varphi_{\gamma}$, which has conjugate coordinates established by
\begin{subequations}
\begin{align}
    \eta &= \frac{1}{1+\gamma \xi\cdot\mathrm{D}\varphi_{\gamma}(\xi)} \mathrm{D}\varphi_{\gamma}(\xi)
    ~,\label{eq:dual_var0}\\
    \xi &= \frac{1}{1+\gamma \xi \cdot \mathrm{D}\psi_{\gamma}(\eta)} \mathrm{D}\psi_{\gamma}(\eta)~,
    \label{eq:dual_var1}
\end{align}
\end{subequations}
with $\mathrm{D}\varphi$ denoting the Euclidean gradient of $\varphi$.
Finally, it is worth noting that
\begin{equation}
\label{eq:grad_EV}
    \mathrm{D}\varphi_{\gamma} (\xi) = \mathbb{E}_{\xi}\left\{\frac{h(X)}{1+\gamma \xi \cdot h(X)} \right\}=:\mathbb{E}_{\xi}\{Z_{\xi}(h)\}~,
\end{equation}
where $X$ is a random variable that follows the distribution $p_\xi(x)$, $h(X)$ denotes the sufficient statistics of $X$, and $Z_{\xi}(h)$ is defined implicitly as the quantity within the curly brackets. Hence these generalized Fenchel-Legendre dual coordinates can be alternatively expressed as
\begin{equation}\label{eq:dual_esp}
    \eta =  \frac{1}{1+\gamma \xi\cdot \mathbb{E}_{\xi}\{Z_{\xi}(h)\}} \mathbb{E}_{\xi}\{Z_{\xi}(h)\}~.
\end{equation}
For the case of $\gamma=0$, Eq.~\eqref{eq:dual_esp} reduces to the well-known relationship given by $\eta = \mathbb{E}_{\xi}\{h(X)\}$, (see Appendix~\ref{app:special_cases} for further comments).

\begin{figure}
  \includegraphics[width=8.5cm,height=6cm,keepaspectratio]{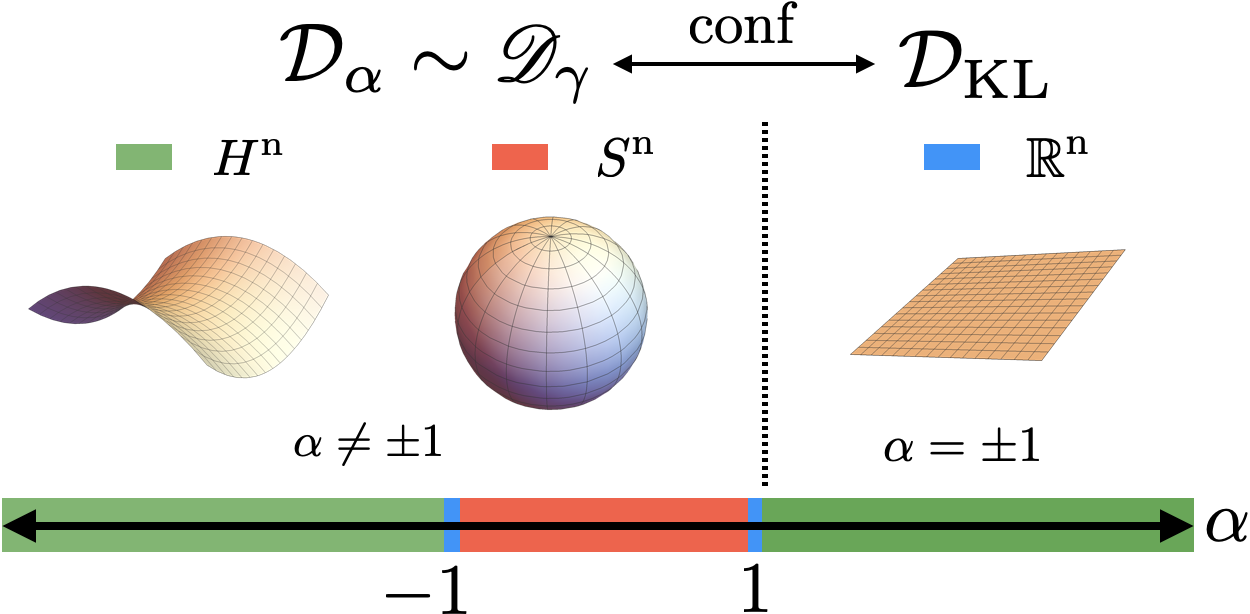}
  \caption{An schematic diagram depicting the three classes of geometrical structures that arise from their $\alpha$-value. The curved (i.e. $\alpha \neq \pm1$) geometries are characterized by the $\alpha$- and R\'enyi's divergence, both of which are conformally-projectively related to the $KL$ divergence --- which in turn generates a flat geometry.}
\end{figure}

\subsection{Conformal-projective classes}\label{sec:proj}

Conformal transformations are operations over geometric structures that are angle-preserving, amounting to (pseudo) rotations and dilation of the points in the manifold. Technically, a conformal transformation on $\mathcal{M}$ is defined as an invertible map $\omega:\mathcal{M}\to \mathcal{M}$
such that the induced metric by the pull-back map $\omega_{*}:T_{\omega(p)}\mathcal{M}\to T_p\mathcal{M}$ is related to the original metric up to a scaling factor $\lambda(p):\mathcal{M}\to\mathbb{R}$ such that 
\begin{equation}
    \label{eq:conformal_metric}
    g_{p}(\omega_{*}(X),\omega_{*}(Y))=\lambda(\omega(p))g_{\omega(p)}(X,Y)
\end{equation}
for all $X,Y\in T_{\omega(p)}\mathcal{M}$. 
Correspondingly, two metrics $g$ and $\tilde{g}$ are said to be \emph{conformally equivalent} if they can be linked via a conformal factor $\lambda$ as in Eq.~\eqref{eq:conformal_metric}.

Due to their non-Riemannian geometry, geometric transformations on statistical manifolds that are ``structure-preserving'' are not fully specified by their effect on the metric, but also need to characterize its effect on the connections --- which may diverge from metric-dependence via Chentsov's tensor. This characterization can be done by relaying on the notion of \emph{projectively equivalence}: two connections $\nabla$ and $\tilde{\nabla}$ are said to be \emph{projectively equivalent} if there exists a 1-form $\nu = a_{i}(\xi) \text{d}\xi^i$ that satisfies
\begin{equation}
  \Gamma^{k}_{ij}(\xi) = \tilde{\Gamma}^{k}_{ij}(\xi) + a_{i}(\xi)\delta^{k}_{j} 
  + a_{j}(\xi)\delta^{k}_{i}~,
\end{equation}
with $\delta_i^j$ the Kronecker delta~\footnote{Equivalently, projective equivalence can be defined by $\nabla_X Y = \tilde{\nabla}_X Y + \nu(X)Y + \nu(Y)X$ for any smooth pair of vector fields $X$ and $Y$.}.

A convenient way to put these notions together and build conformal-projective transformations is by considering transformations over divergences. 
Two divergences $\pazocal{D}$ and $\tilde{\pazocal{D}}$ are said to belong to the same \emph{conformal-projective class} if two conditions are met: (i) their induced metrics are conformally equivalent, and (ii) their induced connections are projectively equivalent. It can be shown that two divergences belong to the same conformally-projective class if and only if they satisfy
\begin{equation}
    \tilde{\pazocal{D}}[\xi;\xi']=\lambda(\xi)\pazocal{D}[\xi;\xi']~,
\end{equation}
with $\lambda$ being the \emph{conformal-projective factor}~\footnote{In general $\lambda$ could depend on both $\xi$ and $\xi'$~\cite{Kurose2002,amari2021information}; however, for the purposes of this paper we restrict ourselves to consider only ``left conformal-projective factors'' (i.e. $\lambda(\xi)$).}.

Let us now study the relationship between the geometries induced by  $\mathcal{D}_{\gamma}$, $\pazocal{D}_{\alpha},$ and $\pazocal{D}_{\text{KL}}$. 
By considering the inverse of Eq.~\eqref{eq:Renyialpha}, one finds that the function 
\begin{equation}\label{eq:F}
    F(x) = \frac{e^{\gamma x} -1}{(1+\gamma)\gamma}~,
\end{equation}
establishes the diffeomorphism  $F(\mathcal{D}_{\gamma}[\xi;\xi'])= \pazocal{D}_{\alpha}[\xi;\xi']$, which reveals that the R\'enyi divergence and $\alpha$-divergences 
generate exactly the same geometry (as 
described by Eqs.~\eqref{eq:class_def}). 
Building on this fact, and leveraging the Legendre-like form of the R\'enyi entropy shown in Eq.~(\ref{eq:def_bigd}), a direct calculation shows that the action of $F$ on $\mathcal{D}_{\gamma}$ can be expressed as a Bregman divergence $\pazocal{D}_{\phi}$ scaled by a conformal-projective factor~\cite[Th.~1]{wong2019logarithmic}:
\begin{equation}
\label{eq:conformal}
    F(\mathcal{D}_{\gamma}[\xi;\xi'])
    = \kappa(\xi)\pazocal{D}_{\phi}[\xi;\xi']~.
\end{equation}
Above, $\phi$ is a scalar potential given by $\phi(\xi)= e^{\gamma \varphi_0(\xi)}$ with $\varphi_0(\xi)$ as given in Eq.~\eqref{eq:def_varphi}, and the conformal-projective factor $\kappa$ has the form
\begin{equation}\label{eq:kappa}
\kappa(\xi)=-\frac{1}{\gamma\phi(\xi)}~.
\end{equation}
Moreover, please note that  $\pazocal{D}_{\phi}$ describes a dually-flat geometry, belonging to the same equivalent class as the KL divergence. 
Thus, these results together establishes that R\'enyi's $\mathcal{D}_{\gamma}$, $\pazocal{D}_{\alpha},$ and $\pazocal{D}_{\text{KL}}$ belong to the same conformal-projective equivalence class.

To conclude, let us present a derivation of the functional form of $\kappa(\xi)$ used in Eq.~\eqref{eq:conformal} following Ref.~\cite{wong2019logarithmic}. The metric induced by $\mathcal{D}_{\gamma}[\xi;\xi']$ is given by
\begin{align}\label{eq:metrixx}
  \tilde{g}_{ij}(\xi) 
  := -\left . 
  \partial_{i , j\prime}
  \mathcal{D}_{\gamma}[\xi;\xi'] \right|_{\xi=\xi'} 
  = \kappa (\xi)\partial_{ij}\phi(\xi)~,
\end{align}
and hence $\tilde{g}_{ij}(\xi)=\kappa(\xi) g_{ij}(\xi)$. Furthermore, its induced connection and metric curvature can be found to be
\begin{subequations}
\begin{align}
    \label{eq:kappaGamma}
    \tilde{\Gamma}_{ij}^{\,\, k}(\xi) &= \frac{\partial_{i}\kappa(\xi)}{\kappa(\xi)} \delta_{j}^{k} + \frac{\partial_{j}\kappa(\xi)}{\kappa(\xi)} \delta_{i}^{k}~, \\
    \label{eq:kappaRiemann}
    \tilde{R}_{ijk}^{\quad \, l}(\xi) &= \kappa(\xi) \left( \partial_{jk}\frac{1}{\kappa(\xi)} \delta_{i}^{l} - \partial_{ik}\frac{1}{\kappa(\xi)} \delta_{j}^{l} \right)~.
\end{align}
\end{subequations}
Hence, by introducing the 1-form $\nu = \mathrm{d}\log \kappa(\xi)$, 
one can identify the affine connection induced by $\tilde{\Gamma}_{ij}^{\,\, k}(\xi)$ as being projectively flat. This 1-form --- or equivalently, the conformal factor $\kappa(\xi)$ --- can be derived from the
Riemann curvature tensor, which for spaces of constant sectional curvature takes the form $R_{ijk}^{\quad \, l}=K (g_{jk}\delta_{i}^{l}-g_{ik}\delta_{j}^{l})$, with $K \in \mathbb{R}$ corresponding to its scalar curvature. As mentioned in Section~\ref{sec:IIc}, the geometry induced by the $\alpha$-divergence has curvature $\omega=(1-\alpha^2)/4$ throughout the whole manifold, and hence its Riemann tensor can be rewritten as 
\begin{equation}\label{eq:sss}
    R_{ijk}^{\quad \, l}=\frac{1+\alpha}{2} (\tilde{g}_{jk}\delta_{i}^{l}-\tilde{g}_{ik}\delta_{j}^{l})~,
\end{equation} 
where a factor $\frac{1-\alpha}{2}=\gamma +1$ from $\omega$ has been absorbed by the metric~\footnote{Note that the metrics coming from the $\alpha$ and R\'enyi divergences are conformally related $\tilde{g}=(\gamma + 1)g$ as seen by~\eqref{eq:Renyialpha}}. Moreover, using the fact that the Riemann tensor is left unchanged by the conformal-projective transformation (i.e. $\tilde{R}_{ijk}^{\quad \, l} = R_{ijk}^{\quad \, l}$), and recognising that $K=-\gamma$, one can use Eqs.~\eqref{eq:metrixx}, \eqref{eq:kappaRiemann} and \eqref{eq:sss} to show that
\begin{equation}\label{eq:sads}
    \frac{1}{\kappa(\xi)} = -\gamma \phi(\xi) + \sum_i a_{i}\xi^{i} + b~,
\end{equation}
for some $a_{i},b \in \mathbb{R}$. Finally, as the linear terms can be absorbed in the definition of $\phi$, Eq.~\eqref{eq:sads} leads to the expression for $\kappa(\xi)$ as shown above.

\section{Orthogonal foliations in curved statistical manifolds}
\label{sec:hiearchical}

This section presents the study of orthogonal foliations in curved statistical manifolds. For simplicity of the exposition, the rest of the paper focuses on multivariate distributions of $n$ binary random variables --- i.e. distributions of the form $p(x)$ where $x=(x_1,\dots,x_n)$ with $x_i\in\{0,1\}$, and hence $\boldsymbol\chi = \{0,1\}^n$.

\subsection{Orthogonal foliations on flat-projective spaces}
\label{sec:dual_foliation}

Let us consider a parametrization $\nu$ of the manifold $\mathcal{M}$. Then, for a given $p_\nu\in\mathcal{M}$ we define 
\begin{equation}\label{eq:M}
    \tM_k\{p_\nu\} := \{ q_{\nu'}\in\mathcal{M} | \nu_i' = \nu_i \;\forall i=1,\dots,k\}~,
\end{equation}
which establishes a nested structure on the manifold of the form
\begin{align}
\label{eq:foliationM}
  \{p\} = \tM_n\{p\}
  \subset \tM_{n-1}\{p\} 
  \subset \dots 
  \subset \tM_{0}\{p\} = \mathcal{M}~.
\end{align}
The parametrization $p_\nu$ also induces a natural basis for the cotangent space at each $p\in\mathcal{M}$, which we denote by $\partial_{\nu_i}\in T^*_{p}\mathcal{M}$. To study the geometry of this basis, let's consider
the functional form of $\mathcal{D}_\gamma$ induced by $\nu$, which is given by $\mathcal{D}_\gamma[\nu ; \nu'] := \mathcal{D}_\gamma( p_\nu || p_{\nu'} )$. Then, the inner product between the basis elements $\partial_{\nu_i}$ can be calculated as
\begin{equation}
    \langle \partial_{\nu_i}, \partial_{\nu'_j} \rangle 
    = -\partial_{\nu_i,\nu^\prime_j} \mathcal{D}_\gamma[\nu;\nu']\big|_{\nu'=\nu} = \tilde{g}^{ij}(\nu)~.
\end{equation}

The properties of $\mathcal{D}_\gamma$ guarantees that $\tilde{g}^{ij}(\nu)$ is positive-definite, and hence it has a well-defined inverse for each $\nu$ which we denote by $r^{ij}(\nu) := \big(g^{-1}(\nu)\big)^{ij}$.
By denoting as $\theta$ the primal coordinates  
with respect to $r$, one can then define
\begin{equation}\label{eq:E}
\tE_k := \{ p_{\theta}\in\mathcal{M} | \theta_j=\theta_j^u,\;\forall j>k\},     
\end{equation}
where $\theta^u$ denote the $\theta$-coordinates of the uniform distribution $u$. It is direct to verify that 
\begin{align}
\label{eq:foliation}
  \{u\}=\tE_0 \subset \tE_1 \subset \dots \subset \tE_n = \mathcal{M}~.
\end{align}
Interestingly, $\tE_k$ grows with $k$ while $\tM_k$ shrinks such that for each $k$ their combined dimensions sum up to $n$ --- being enough to account for the dimensionality of  $\mathcal{M}$. Furthermore, due to the fact that these complementary dimensions are orthogonal, this implies that their intersection cannot be empty.

We summarize these ideas in the following definition.
\begin{definition}
For a given parametrization $\nu$ of $\mathcal{M}$ for which $\tE_k$ exists, then the \emph{orthogonal foliation} of $\mathcal{M}$ associated to $p_\nu$ is the collection of sets $\big\{\tM_k\{p_\nu\},\tE_k\big\}$.
\end{definition}

Please note that the bases of $T_{p}\mathcal{M}$ and $T^*_{p}\mathcal{M}$ determined by the generalized Fenchel-Legendre dual coordinates established by Eqs.~\eqref{eq:dual_var0} and \eqref{eq:dual_var1} are 
not orthogonal under the inner product related to the scalar potential $\varphi$ and its conjugate if $\gamma>0$, as discussed in Appendix~\ref{app:pythagorean}. Therefore, the standard relationship between geometric duality and Fenchel-Legendre duality that holds for $\gamma=0$ is broken in curved statistical manifolds. 
Nonetheless, projective-flatness allows for the metric induced by $\mathcal{D}_\gamma$ to be expressible in coordinates where the bases are manifestly orthogonal up to a conformal-projective factor, so that $\langle\partial_{\xi_i},\partial_{\eta^j}\rangle = \kappa(\theta) \delta_{i}^{j}$ with $\kappa(\theta)$ as defined in Eq.~\eqref{eq:kappa}. Then, $\theta$ and its Fenchel-Legendre conjugate established by Eq.~\eqref{eq:scalar_pot} define
a set of conformal-projective coordinates.

Crucially, orthogonal foliations satisfy a Pythagorean property, as shown by the following lemma.

\begin{lemma}\label{lemma:orto}
Given an orthogonal foliation $\{\tM_k\{p\},\tE_k\}$, 
if $p\in \tM_k\{p\}$, $r\in \tE_k$, and $q \in \tM_k\{p\} \cap \tE_k$ then
\begin{equation}\label{eq:Pyth_rel}
    \mathcal{D}_{\gamma}(p||r) 
    = \mathcal{D}_{\gamma}(p||q) 
    + \mathcal{D}_{\gamma}(q||r)~.
\end{equation}
\end{lemma}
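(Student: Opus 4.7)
The plan is to reduce the claim to the generalized Pythagorean identity for the $\alpha$-divergence stated in Eqs.~\eqref{eq:Add_prop}--\eqref{eq:divsphere}, and then invoke the logarithmic relation between $\mathcal{D}_{\gamma}$ and $\pazocal{D}_{\alpha}$ from Eq.~\eqref{eq:Renyialpha} to turn the resulting multiplicative identity into the desired additive one.

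First, I would unpack the geometric content of the orthogonal foliation at the intersection point $q\in\tM_k\{p\}\cap\tE_k$. Because $\tM_k\{p\}$ is defined by fixing the first $k$ components of $\nu$, its tangent space at $q$ is spanned by $\{\partial_{\nu^j}\}_{j>k}$; analogously, $T_q\tE_k$ is spanned by $\{\partial_{\theta_i}\}_{i\le k}$. Since $\theta$ and $\nu$ form a conformal-projective dual pair, the pairing $\langle \partial_{\theta_i},\partial_{\nu^j}\rangle=\kappa(q)\,\delta_i^j$ immediately gives that these two subspaces are orthogonal with respect to the Fisher metric.

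Second, I would argue that the $\alpha$-geodesic joining $p$ and $q$ remains inside $\tM_k\{p\}$, and the $\alpha$-geodesic joining $q$ and $r$ remains inside $\tE_k$. Equivalently, each coordinate slice must be autoparallel with respect to the relevant connection; the key ingredient is the projective flatness of the conformal-projective coordinates, whose Christoffel symbols in Eq.~\eqref{eq:kappaGamma} force coordinate-straight lines to coincide with geodesics of the corresponding connection within each slice. Combined with the previous paragraph, the tangent vectors of the two geodesics at $q$ then lie in Fisher-orthogonal subspaces, so Eq.~\eqref{eq:Add_prop} --- equivalently, Eq.~\eqref{eq:divsphere} --- applies.

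Third, setting $\alpha=-1-2\gamma$ gives $\omega=(1-\alpha^2)/4=-\gamma(1+\gamma)$, and using Eq.~\eqref{eq:Renyialpha} one recognises $1-\omega\,\pazocal{D}_{\alpha}(\cdot\|\cdot)=e^{\gamma\mathcal{D}_{\gamma}(\cdot\|\cdot)}$. Eq.~\eqref{eq:divsphere} then reads $e^{\gamma\mathcal{D}_{\gamma}(p\|r)}=e^{\gamma\mathcal{D}_{\gamma}(p\|q)}\,e^{\gamma\mathcal{D}_{\gamma}(q\|r)}$, and taking logarithms and dividing by $\gamma$ yields Eq.~\eqref{eq:Pyth_rel}. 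I expect the main obstacle to lie in the second step: rigorously justifying that the $\alpha$-geodesics remain inside the coordinate slices. In the dually flat case ($\gamma=0$) this is transparent from the autoparallel nature of mixture and exponential families, but in the present curved setting one must appeal to projective flatness and carefully track which of the dual connections $\nabla$ and $\nabla^{*}$ underlies each geodesic, since in the non-flat regime these two classes of geodesics are genuinely distinct.
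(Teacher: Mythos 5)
Your argument is correct in outline but follows a genuinely different route from the paper's own proof. You reduce the claim to the curved Pythagorean identity for the $\alpha$-divergence, Eqs.~\eqref{eq:Add_prop}--\eqref{eq:divsphere}, and then use $1-\omega\,\pazocal{D}_{\alpha}=e^{\gamma\mathcal{D}_{\gamma}}$ with $\omega=-\gamma(1+\gamma)$ to turn the multiplicative identity into the additive one; this last step is exactly the observation that motivates Eq.~\eqref{eq:Renyialpha} in Section~\ref{sec:IIc}, and your algebra checks out. The paper's Appendix~\ref{app:pythagorean} never invokes Eq.~\eqref{eq:Add_prop}: it works directly with the Bregman-like form \eqref{eq:def_bigd}, computes the pairing $\langle\partial_{\xi^i},\partial_{\eta_j}\rangle$ explicitly from the divergence (Eq.~\eqref{eq:inner_prod}), and shows that the vanishing of $\langle A,B\rangle$ for the tangent vectors of the primal and dual geodesics at $q$ is \emph{equivalent} to the four-point condition $(1+\gamma\,\xi_{q}\cdot\eta_{p})(1+\gamma\,\xi_{r}\cdot\eta_{q})=(1+\gamma\,\xi_{r}\cdot\eta_{p})(1+\gamma\,\xi_{q}\cdot\eta_{q})$, which via \eqref{eq:def_bigd} is precisely Eq.~\eqref{eq:Pyth_rel}. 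What your route buys is economy: the analytic content is outsourced to the cited relation \eqref{eq:Add_prop}, so no metric computation is needed. What the paper's route buys is self-containedness and an exact ``if and only if'' between Fisher orthogonality and the additive identity, phrased entirely in the generalized Fenchel--Legendre coordinates. Both arguments rest on the same geometric premise --- that the primal geodesic from $q$ to $r$ (straight in $\xi$, lying in $\tE_k$) and the dual geodesic from $q$ to $p$ (straight in $\eta$, lying in $\tM_k\{p\}$) are orthogonal at $q$ --- which the paper also asserts rather than derives in detail; you are right to flag this as the main obstacle, and your appeal to projective flatness to keep the geodesics inside the coordinate slices, with care about which of $\nabla$, $\nabla^{*}$ governs each, is the correct missing ingredient, though in your write-up it remains stated rather than carried out.
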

\begin{proof}
See Appendix~\ref{app:pythagorean}.
\end{proof}

It is important to note that while building orthogonal coordinates is a relatively simple construction, these don't necessarily generally guarantee a Pythagorean relationship. As a matter of fact, although the equivalence between R\'enyi's and $\alpha$-divergences 
ensures that both divergences induce the same geometry, only R\'enyi's exhibits a correspondence between orthogonality on the metric and a Pythagorean relationship on the divergence (see Section~\ref{sec:IIc}).
To illustrate these ideas, let us consider a particular construction where we take $\tM_k$ as the set of probabilities distributions with fixed expectation values, denoted by $\eta$, and come up with its orthogonal complement. From $\phi$ as the potential encoding these change of coordinates, we define its conjugate potential $\bar{\psi}= \min_{\xi}(\xi\cdot \eta -\phi(\xi))$. In this way, the primal coordinates $\bar{\xi}$ orthogonal to $\eta$ follow from $\mathrm{D}(\xi\cdot \eta -\phi(\xi))$, that is,
\begin{equation}
    \bar{\xi}^i = \mathbb{E}_{\xi}\{h^i (x)\} -\frac{1}{\gamma \kappa(\xi)}(\mathrm{D}\log \kappa(\xi))^i~,
\end{equation}
where the first term in the right hand side follows from $\eta^i = \mathbb{E}_{\xi}\{h^i (x)\}$. The primal coordinates $\bar{\xi}^i$, allows to construct an orthogonal complement to $\tM_k$, and from \eqref{eq:exp_fam} one finds that
\begin{equation}
    \bar{\textbf{E}}_k(c_{k^+}) 
    =
    \{ p_{\bar{\xi}}(x) \in \mathcal{M} \:|\: \bar{\xi}_{k^+} = c_{k^+} \}.
\end{equation}

\subsection{Higher-order hierarchical decomposition} 

Using a orthogonal foliation, we now introduce the notion of hierarchical decomposition on curved statistical manifolds.

\begin{definition}
The $k$-th order $\gamma$-projection of $p\in\mathcal{M}$ under the orthogonal foliation $\{\tM_k\{p\},\tE_k\}$ is
\begin{equation}\label{eq:def_projection}
  \tilde{p}^{(k)} := \argmin_{q \in \tE_k} \mathcal{D}_{\gamma}(p\,;q)
  = \argmin_{q \in \tE_k} \pazocal{D}_{\alpha}(p\,;q)~.
\end{equation}
\end{definition}
Above, the minimum under $\mathcal{D}_{\gamma}$ and $\pazocal{D}_{\alpha}$ is the same, as both divergences are related by a monotonous function as shows by Eq.~\eqref{eq:Renyialpha}. 
An useful property of the orthogonal foliation is that it enables a useful characterization of $\tilde{p}^{(k)}$ for $k>0$, as shown in the next Lemma.

\begin{lemma}\label{lemma:inc}
The $k$-th order $\gamma$-projection of $p\in\mathcal{M}$ satisfies 
$\{\tilde{p}^{(k)}\} = \tE_k \cap \tM_{k}\{p\}$.
\end{lemma}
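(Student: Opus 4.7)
The plan is to exploit the Pythagorean relationship of Lemma~\ref{lemma:orto} together with a dimensional argument on the orthogonal foliation. The statement asserts two things simultaneously: (i) the intersection $\tE_k \cap \tM_k\{p\}$ is a singleton, and (ii) its unique element coincides with the divergence-minimizer $\tilde{p}^{(k)}$.

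I would begin by establishing (i). The submanifold $\tE_k$ is parametrized by freely varying the first $k$ primal coordinates $\theta_1,\dots,\theta_k$ while the remaining $\theta_{k+1},\dots,\theta_n$ are fixed to the uniform values, giving $\dim \tE_k = k$. Conversely, by Eq.~\eqref{eq:M}, $\tM_k\{p\}$ is parametrized by freely varying $\nu_{k+1},\dots,\nu_n$ with the first $k$ dual coordinates fixed by $p$, giving $\dim \tM_k\{p\} = n-k$. Because the $\theta$- and $\nu$-coordinates are conjugate under the generalized Fenchel–Legendre duality discussed in Section~\ref{sec:prel1}, the constraints defining the two submanifolds are complementary. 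Since the combined dimensions add up to $n = \dim \mathcal{M}$ and the two coordinate blocks span transverse directions under the conformally-projective metric, the intersection is zero-dimensional and consists of a single point, which I denote $q^{*}$.

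Next I would prove (ii) by applying Lemma~\ref{lemma:orto}. By construction $p \in \tM_k\{p\}$ and $q^{*} \in \tM_k\{p\} \cap \tE_k$, so for any candidate $r \in \tE_k$ the hypotheses of the Pythagorean lemma are satisfied and
\begin{equation}
    \mathcal{D}_{\gamma}(p||r) = \mathcal{D}_{\gamma}(p||q^{*}) + \mathcal{D}_{\gamma}(q^{*}||r).
\end{equation}
Since $\mathcal{D}_{\gamma}(q^{*}||r) \ge 0$ with equality if and only if $q^{*} = r$, one concludes $\mathcal{D}_{\gamma}(p||r) \ge \mathcal{D}_{\gamma}(p||q^{*})$ strictly whenever $r \neq q^{*}$. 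Hence $q^{*}$ is the unique minimizer of $\mathcal{D}_{\gamma}(p||\,\cdot\,)$ over $\tE_k$, so $q^{*} = \tilde{p}^{(k)}$ by definition~\eqref{eq:def_projection}, which gives $\{\tilde{p}^{(k)}\} = \tE_k \cap \tM_k\{p\}$.

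The main obstacle is the rigorous justification of step (i), namely that the intersection is nonempty and a singleton. Mere dimension counting shows that generically the transverse intersection is a point, but one needs to ensure that the orthogonal-coordinate change from $\nu$ to $\theta$ is a bijection onto the relevant neighborhood. This is where the projective-flatness property of the metric induced by $\mathcal{D}_\gamma$ (Eq.~\eqref{eq:metrixx}) and the monotonicity of the generalized Legendre transform~\eqref{eq:dual_var0}–\eqref{eq:dual_var1} enter: they guarantee that the conformal-projective coordinate pair $(\theta,\eta)$ is well-defined globally on $\mathcal{M}$, so that fixing $k$ components of $\theta$ and the complementary $n-k$ components of $\eta$ (or of the $\nu$-block used to define $\tM_k$) determines a unique distribution. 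Once this is argued, the Pythagorean step is essentially mechanical.
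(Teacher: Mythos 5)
Your proof is correct and follows essentially the same route as the paper: both hinge on applying Lemma~\ref{lemma:orto} with the intersection point in the middle and then invoking the defining minimality of $\tilde{p}^{(k)}$ over $\tE_k$, differing only in that you let $r$ range over all of $\tE_k$ while the paper specializes to $r=\tilde{p}^{(k)}$. Your added discussion of why $\tE_k\cap\tM_k\{p\}$ is nonempty corresponds to the paper's earlier remark on the complementary orthogonal dimensions of the foliation, so nothing essential is missing or different.
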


\begin{proof}
Consider $q\in\tE_k \cap \tM_{k}\{p\}$. It is direct to verify that $p,q\in\tM_k\{p\}$ 
and $q,\tilde{p}^{(k)}\in\tE_k$. 
Then, Lemma~\ref{lemma:orto} 
implies that
\begin{equation}\label{eq:asdvbg}
    \mathcal{D}_{\gamma}(p||\tilde{p}^{(k)}) 
    = \mathcal{D}_{\gamma}(p||q) 
    + \mathcal{D}_{\gamma}(q||\tilde{p}^{(k)}) \geq \mathcal{D}_{\gamma}(p||q) ~.
\end{equation}
Additionally, Eq.~\eqref{eq:def_projection} and the fact that $q\in\tE_k$ imply that $\mathcal{D}_{\gamma}(p||q) \geq \mathcal{D}_{\gamma}(p||\tilde{p}^{(k)})$, which together with Eq.~\eqref{eq:asdvbg} show that  $\mathcal{D}_{\gamma}(p||\tilde{p}^{(k)})=\mathcal{D}_{\gamma}(p||q)$. This, combined again with Eq.~\eqref{eq:asdvbg}, implies in turn that $\mathcal{D}_{\gamma}(q||\tilde{p}^{(k)})=0$, which can only be satisfied if $q=\tilde{p}^{(k)}$.
\end{proof}

Following Ref.~\cite{amari2001information}, let us consider the mixed coordinates $\nu_k =(\eta_{k^{-}};\xi_{k^{+}})$. Then, due to the duality of $\eta$ and $\xi$, one can verify that $\tilde{p}^{(k)}$ satisfy the mixed coordinates $\tilde{\nu}_k = (\eta_{k^{-}};0)$, where $\eta_{k^{-}}$ are the constraints of order up to $k$ of $p$. Interestingly, note that $u = \tE_0(0)$ is equal to the uniform distribution $u$ for all $p\in\mathcal{M}$ and all $\gamma$.

\begin{figure}
  \includegraphics[width=8.5cm,height=6cm,keepaspectratio]{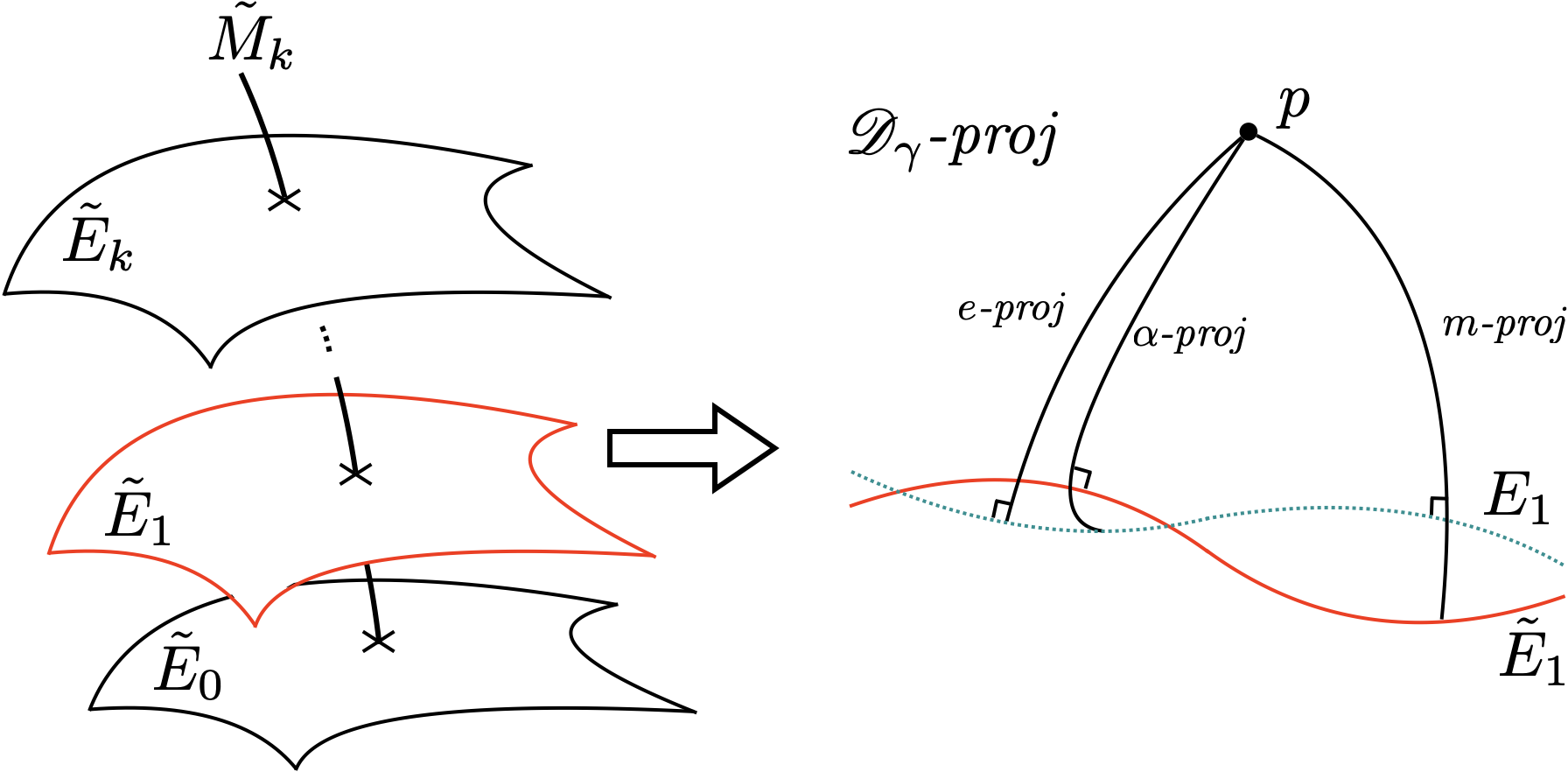}
  \caption{\textit{(left)} Orthogonal foliation of manifold $\mathcal{M}$. \textit{(right)} Projections onto $\textbf{E}_1$ leaf (associated with $\alpha=1$) and its deformation $\tE_1$ related to $\alpha\neq\pm1$.}
  \label{fig:fol_struct}
\end{figure}

With these definitions at hand, we can prove the following result.

\begin{theorem}\label{th:main}
For a given $p\in\mathcal{M}$, the collection of the $\gamma$-projections $\tilde{p}^{(n-1)},\dots,u$ satisfy
\begin{equation}
\label{eq:div_decomp}
  \mathcal{D}_{\gamma}(p||u) = \sum_{k=1}^{n}\mathcal{D}_{\gamma}(\tilde{p}^{(k)}||\tilde{p}^{(k-1)})~.
\end{equation}
\end{theorem}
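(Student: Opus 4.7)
The plan is to prove Eq.~\eqref{eq:div_decomp} by iterating the Pythagorean relation of Lemma~\ref{lemma:orto} and telescoping along the chain of projections $\tilde{p}^{(n)},\tilde{p}^{(n-1)},\dots,\tilde{p}^{(0)}$. The two endpoints of this chain are pinned down immediately by the definitions: $\tE_n=\mathcal{M}$ forces $\tilde{p}^{(n)}=p$ (the minimizer in Eq.~\eqref{eq:def_projection} is attained at $p$ itself), and $\tE_0=\{u\}$ forces $\tilde{p}^{(0)}=u$. So it suffices to rewrite $\mathcal{D}_{\gamma}(p||u)=\mathcal{D}_{\gamma}(\tilde{p}^{(n)}||\tilde{p}^{(0)})$ as a sum of successive one-step divergences $\mathcal{D}_{\gamma}(\tilde{p}^{(k)}||\tilde{p}^{(k-1)})$.

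For each $k=1,\dots,n$, I would apply Lemma~\ref{lemma:orto} at foliation level $k-1$ with the assignments (``$p$'', ``$q$'', ``$r$'')$=(\tilde{p}^{(k)},\tilde{p}^{(k-1)},u)$. Three hypotheses must be verified: (i) $\tilde{p}^{(k)}\in\tM_{k-1}\{\tilde{p}^{(k)}\}$ is trivial; (ii) $u\in\tE_{k-1}$ follows from the nesting $\{u\}=\tE_0\subset\tE_{k-1}$ in Eq.~\eqref{eq:foliation}; and (iii) $\tilde{p}^{(k-1)}\in\tM_{k-1}\{\tilde{p}^{(k)}\}\cap\tE_{k-1}$. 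The $\tE_{k-1}$ part of (iii) is immediate from the definition of $\tilde{p}^{(k-1)}$. For the $\tM_{k-1}$ part I would invoke the mixed-coordinate characterization stated after Lemma~\ref{lemma:inc}: in coordinates $\nu=(\eta_{k^-};\xi_{k^+})$, the projection $\tilde{p}^{(j)}$ has coordinates $(\eta_{j^-};0)$ with $\eta_{j^-}$ inherited from $p$, so both $\tilde{p}^{(k)}$ and $\tilde{p}^{(k-1)}$ carry the same first $k-1$ $\nu$-components as $p$, hence they sit in the same leaf $\tM_{k-1}\{p\}=\tM_{k-1}\{\tilde{p}^{(k)}\}$.

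With (i)--(iii) in hand, Lemma~\ref{lemma:orto} yields
\begin{equation}
\mathcal{D}_{\gamma}(\tilde{p}^{(k)}||u)=\mathcal{D}_{\gamma}(\tilde{p}^{(k)}||\tilde{p}^{(k-1)})+\mathcal{D}_{\gamma}(\tilde{p}^{(k-1)}||u),
\end{equation}
for each $k$. Applying this recursion starting at $k=n$ (so that the left-hand side is $\mathcal{D}_{\gamma}(p||u)$) and iterating downward to $k=1$ produces a telescoping sum whose leftover term $\mathcal{D}_{\gamma}(\tilde{p}^{(0)}||u)=\mathcal{D}_{\gamma}(u||u)$ vanishes, which is exactly Eq.~\eqref{eq:div_decomp}.

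The main obstacle I anticipate is hypothesis (iii), which is where the geometric content really lives: one must ensure that the $(k-1)$-th projection of $p$ also serves as the $(k-1)$-th projection of $\tilde{p}^{(k)}$, i.e.\ that the leaves $\tM_{k-1}\{\cdot\}$ anchored at different projections nest consistently. This compatibility hinges entirely on the mixed-coordinate description of the $\gamma$-projections and on the projective flatness established in Section~\ref{sec:proj}; once it is secured, the theorem reduces to a mechanical telescoping argument.
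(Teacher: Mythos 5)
Your proposal is correct and follows essentially the same route as the paper: the paper also applies Lemma~\ref{lemma:orto} with the triple $(p,\tilde{p}^{(n-1)},u)$ at level $n-1$ and then recurses on $\mathcal{D}_{\gamma}(\tilde{p}^{(n-1)}||u)$, which is exactly your telescoping scheme; you are simply more explicit about the hypotheses at each step. The compatibility you flag in (iii) does hold, and in fact follows even more directly than via the mixed coordinates: Lemma~\ref{lemma:inc} gives $\tilde{p}^{(k)}\in\tM_{k}\{p\}\subset\tM_{k-1}\{p\}$ by the nesting in Eq.~\eqref{eq:foliationM}, so $\tM_{k-1}\{\tilde{p}^{(k)}\}=\tM_{k-1}\{p\}$ and $\tilde{p}^{(k-1)}\in\tM_{k-1}\{p\}\cap\tE_{k-1}$ as required.
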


\begin{proof}
Let's start noting that both $\tilde{p}^{(n-1)}$ and $u$ belong to $\tE_{n-1}$, while 
both $p$ and $\tilde{p}^{(n-1)}$ belong to $\tM_{n-1}$ due to Lemma~\ref{lemma:inc}.  
Therefore, Lemma~\ref{lemma:orto} implies that
\begin{equation}
\label{eq:cool}
  \mathcal{D}_{\gamma}(p||u) = \mathcal{D}_{\gamma}(p||\tilde{p}^{(n-1)})
  +\mathcal{D}_{\gamma}(\tilde{p}^{(n-1)}||u)~.
\end{equation}
The rest of the proof can be done following a similar rationale recursively on $\mathcal{D}_{\gamma}(\tilde{p}^{(n-1)}||u)$.

\end{proof}

To better understand the deformation of the layers induced by $\gamma$, it is beneficial to consider the mean-field theory approach presented in Ref.~\cite{amari2001information2}. Let's consider a classic Ising model for which two layers suffice to describe the system, and focus in its projection to $E_1$.  In~\cite{amari2001information2} the $m$ and $e$ projections denote the solution and naive approximations, respectively, which are both orthogonal. Moreover, the $\alpha$-projection draws the trajectory of solutions in between. In the current picture, however, the submanifolds are deformed in such a way that the $\alpha$-projection becomes orthogonal with $\alpha=\pm 1$, which are left as fixed points (see Figure~\ref{fig:fol_struct}).

\section{Generalising the Maximum Entropy principle}
\label{sec:maxentRenyi}

\subsection{R\'{e}nyi's entropy and related quantities}

Consider a manifold of distributions whose support allows a flat distribution. Then, the $\alpha$-\emph{negentropy} of $p$ is defined as
\begin{equation}
    \pazocal{N}_{\gamma}(p) 
    := \Lambda - H_{\gamma}(p)~, 
\end{equation}
with $H_\gamma =\Lambda$ being the R\'enyi entropy of the uniform distributions, which corresponds to $\log |\bm \chi|$ for finite $\bm \chi$ or $\log n$ in the continuum, and 
\begin{equation}
  H_{\gamma}(p) = \frac{-1}{\gamma}\log \int_{\boldsymbol\chi} p(x;\xi)^{\gamma+1} d\mu(x)
\end{equation}
being the well-known R\'enyi entropy. This definition recovers the standard Shannon entropy and negentropy in the case $\gamma=0$~\footnote{One can interpret a continuous decrease in the constant sectional $\alpha$-curvature of the manifold manifesting as a decrease in the order of R\'enyi's entropy in statistics, eventually converging to Shannon's for $\gamma \to 0$ limit.}.

Another quantity of interest is the $\gamma$-\emph{Total Correlation}, defined as
\begin{align}
  \textrm{TC}_{\gamma}(\bm X^n)
  =
  \sum_{i=0}^n H_{\gamma}(X^i) - H_{\gamma}
  (\bm X^n),
\end{align}
where $\bm X^n := (X_1,\dots,X_n)$ is a random vector that distributes according to $p_\xi(X=x)$ with $x=(x_1,\dots,x_n)$. 
This is a generalization of the well-known Total Correlation for Shannon's entropy (also known as Multi-information~\cite{studeny1998multiinformation}), which is a generalization of Shannon's mutual information for the case of 3 or more variables~\cite{rosas2019quantifying}. In particular, if $n=2$ then the total correlation gives a R\'enyi's mutual information.

\subsection{A hierarchical decomposition of R\'{e}nyi's entropy}

With a hierarchical decomposition $p,p^{(n-1)},\dots,u$ at hand, we are now poized to address the problem of entropy decomposition based on the relevance of each order. 

\begin{lemma}\label{lemma:telescopic}
Consider a the $\gamma$-projections of $p\in\mathcal{M}$ under an orthogonal foliation $\{\tM_k\{p\},\tE_k\}$ such that $\tE_0 = \{u\}$ with $u$ the uniform distribution. 
Then, the following holds for $l<k$:

\begin{equation}
    \mathcal{D}_{\gamma}(\tilde{p}^{(k)}||\tilde{p}^{(l)}) = H_{\gamma}(\tilde{p}^{(l)}) - H_{\gamma}(\tilde{p}^{(k)})~.
\end{equation}
\end{lemma}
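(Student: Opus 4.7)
The strategy is to reduce the identity to a single application of the Pythagorean relation (Lemma~\ref{lemma:orto}) anchored at the uniform distribution, so that all the $\mathcal{D}_\gamma$'s involving $u$ can be traded for R\'enyi entropies.

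First I would establish the auxiliary identity
\[
\mathcal{D}_{\gamma}(p || u) = \Lambda - H_{\gamma}(p) = \pazocal{N}_{\gamma}(p),
\]
valid for any $p\in\mathcal{M}$. This follows by direct substitution of the uniform $u(x)=e^{-\Lambda}$ into Eq.~\eqref{eq:Renyi_div}: the factor $u(x)^{-\gamma}$ pulls out of the integral as $e^{\gamma\Lambda}$, and what remains is exactly the defining integral of $H_{\gamma}(p)$ up to the prefactor $-1/\gamma$. Hence the R\'enyi divergence from any distribution to the uniform equals its $\gamma$-negentropy.

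Next I would apply Lemma~\ref{lemma:orto} to the triple $(p,q,r)=(\tilde{p}^{(k)},\tilde{p}^{(l)},u)$, using the orthogonal foliation anchored at $\tilde{p}^{(k)}$. Three inclusions need to be checked. The inclusion $\tilde{p}^{(k)}\in\tM_l\{\tilde{p}^{(k)}\}$ is tautological. The inclusion $u\in\tE_l$ holds because $u\in\tE_0\subset\tE_l$ by the nested structure in Eq.~\eqref{eq:foliation}. The substantive condition is $\tilde{p}^{(l)}\in\tM_l\{\tilde{p}^{(k)}\}\cap\tE_l$: membership in $\tE_l$ is immediate from Lemma~\ref{lemma:inc}, whereas membership in $\tM_l\{\tilde{p}^{(k)}\}$ follows by observing that both $\tilde{p}^{(k)}$ and $\tilde{p}^{(l)}$ lie in $\tM_l\{p\}$---the former via $\tM_k\{p\}\subset\tM_l\{p\}$ for $l<k$ (fewer coordinate constraints), the latter directly from Lemma~\ref{lemma:inc}---so they agree on the first $l$ coordinates of the anchoring parametrization, and each therefore lies in the other's $\tM_l$-leaf.

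With these inclusions in place, Lemma~\ref{lemma:orto} yields
\[
\mathcal{D}_{\gamma}(\tilde{p}^{(k)}|| u) = \mathcal{D}_{\gamma}(\tilde{p}^{(k)}||\tilde{p}^{(l)}) + \mathcal{D}_{\gamma}(\tilde{p}^{(l)}||u),
\]
and subtracting two specializations of the auxiliary identity---one for $\tilde{p}^{(k)}$ and one for $\tilde{p}^{(l)}$---collapses the $\Lambda$ terms and delivers the claimed equality. The main obstacle I anticipate is the bookkeeping behind $\tilde{p}^{(l)}\in\tM_l\{\tilde{p}^{(k)}\}$, which hinges on the monotonicity $\tM_k\{p\}\subset\tM_l\{p\}$ together with Lemma~\ref{lemma:inc}; once the correct leaves are identified, the rest is a straightforward algebraic rearrangement.
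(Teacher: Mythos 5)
Your proof is correct and takes essentially the same route as the paper's: split $\mathcal{D}_{\gamma}(\tilde{p}^{(k)}||u)$ at $\tilde{p}^{(l)}$ via the Pythagorean relation and then convert both divergences-to-$u$ into entropies using $\mathcal{D}_{\gamma}(q||u)=\Lambda-H_{\gamma}(q)$. The only (cosmetic) difference is that the paper invokes the decomposition of Theorem~\ref{th:main} for the intermediate identity, whereas you derive it directly from Lemma~\ref{lemma:orto} with an explicit check of the leaf memberships, which makes your version slightly more self-contained.
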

\begin{proof}
A direct application of Eq.\eqref{eq:div_decomp} shows that
\begin{equation}
\mathcal{D}_{\gamma}(\tilde{p}^{(k)}||u)
= \mathcal{D}_{\gamma}(\tilde{p}^{(k)}||\tilde{p}^{(l)}) 
+
\mathcal{D}_{\gamma}(\tilde{p}^{(l)}||u)~.    
\end{equation}
Then, the desired result follows from re-ordering the terms and using the fact that $\mathcal{D}_{\gamma}(q||u) = \Lambda - H_{\gamma}(q)$ for any $q\in\mathcal{M}$.
\end{proof}

\begin{corollary}\label{corr:simple}
For any multivariate distribution $p$ then
\begin{align}
    \pazocal{N}_{\gamma}(p) 
    &= \mathcal{D}_{\gamma}(p||u)~, \\ 
    \mathrm{TC}_{\gamma}(\bm X^n)
    &=\mathcal{D}_{\gamma}\left( p\Big|\Big|\prod_{k=1}^n p_{X_k} \right)~. 
\end{align}
\end{corollary}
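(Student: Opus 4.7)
The plan is to prove the two identities separately, with each being essentially a specialization of Lemma~\ref{lemma:telescopic}.

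For the first identity, I would instantiate Lemma~\ref{lemma:telescopic} with $k=n$ (so $\tilde{p}^{(n)}=p$) and $l=0$ (so $\tilde{p}^{(0)}=u$), obtaining $\mathcal{D}_{\gamma}(p||u) = H_{\gamma}(u) - H_{\gamma}(p) = \Lambda - H_{\gamma}(p) = \pazocal{N}_{\gamma}(p)$. As a sanity check the same identity admits a one-line direct verification: substituting the uniform density $u(x) = 1/|\boldsymbol\chi|$ into Eq.~\eqref{eq:Renyi_div} and pulling out the constant factor $u^{-\gamma} = |\boldsymbol\chi|^{\gamma}$ yields $\mathcal{D}_{\gamma}(p||u) = \log|\boldsymbol\chi| + \frac{1}{\gamma}\log\int p^{\gamma+1}\,d\mu$, which is exactly $\Lambda - H_{\gamma}(p)$.

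For the second identity, my strategy is to invoke Lemma~\ref{lemma:telescopic} once more, this time with $k=n$ and $l=1$, using the moment-based orthogonal foliation discussed at the end of Section~\ref{sec:dual_foliation}. In that foliation $\tE_1$ is the leaf of ``non-interacting'' distributions while $\tM_1\{p\}$ consists of distributions sharing the first-order marginals of $p$; by Lemma~\ref{lemma:inc} the first-order $\gamma$-projection $\tilde{p}^{(1)} = \tE_1 \cap \tM_1\{p\}$ is therefore the product distribution with the same marginals as $p$, i.e.\ $\tilde{p}^{(1)} = \prod_{k=1}^{n} p_{X_k}$. Lemma~\ref{lemma:telescopic} then gives $\mathcal{D}_{\gamma}(p||\prod_{k} p_{X_k}) = H_{\gamma}(\prod_{k} p_{X_k}) - H_{\gamma}(p)$, and the additivity of the R\'enyi entropy over product distributions---which follows at once from $\int (\prod_{k} p_{X_k})^{\gamma+1}\,d\mu = \prod_{k} \int p_{X_k}^{\gamma+1}\,d\mu_k$---reduces the entropy of the factorized distribution to $\sum_{k} H_{\gamma}(X_k)$. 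Assembling these pieces recovers the definition of $\mathrm{TC}_{\gamma}(\bm X^n)$.

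The main obstacle is the identification $\tilde{p}^{(1)} = \prod_{k} p_{X_k}$ in the curved ($\gamma \neq 0$) regime. Within the deformed-exponential parametrization of Eq.~\eqref{eq:def_exp}, distributions carrying only first-order sufficient statistics do \emph{not} in general factorize into a product of the true marginals, so the projection cannot be read off from the deformed coordinates alone. The argument instead rests on the moment-based construction sketched in Section~\ref{sec:dual_foliation}: matching the expectation values $\eta_{1^-}$ forces $\tilde{p}^{(1)}$ to share its marginals with $p$, while membership in $\tE_1$ forces the factorized form, and verifying that this intersection singles out the ordinary product $\prod_{k} p_{X_k}$ (rather than some $\gamma$-deformed analogue) is precisely the step that makes the additivity of $H_{\gamma}$ usable.
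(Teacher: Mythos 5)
Your treatment of the first identity is correct. The direct computation---substituting $u=1/|\boldsymbol\chi|$ into Eq.~\eqref{eq:Renyi_div} and pulling out the constant $|\boldsymbol\chi|^{\gamma}$---is in fact the only non-circular route: the paper states the corollary without proof, and the proof of Lemma~\ref{lemma:telescopic} already invokes ``$\mathcal{D}_{\gamma}(q||u)=\Lambda-H_{\gamma}(q)$ for any $q$'' as a known fact, so deriving the first identity \emph{from} that lemma (your $k=n$, $l=0$ instantiation) assumes what is to be proved. Keep the one-line verification as the primary argument and treat the lemma-based derivation only as a consistency check.

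The second identity is where the real problem lies, and you have put your finger on exactly the step that cannot be completed: the identification $\tilde{p}^{(1)}=\prod_{k}p_{X_k}$ fails for $\gamma\neq 0$. By Lemma~\ref{lemma:asdggh} the first-order $\gamma$-projection has the deformed-exponential form $e^{-z_\gamma(\theta)}\big(1+\gamma\theta\cdot h(x)\big)^{1/\gamma}$ with only first-order statistics, which does not factorize into a product over the variables; hence $\tE_1$ does not contain $\prod_k p_{X_k}$, Lemma~\ref{lemma:telescopic} does not apply to that point, and the additivity of $H_{\gamma}$ over products never becomes available. Worse, the identity is false as literally stated: take $n=2$ binary variables with $p(0,0)=1/2$, $p(0,1)=p(1,0)=1/4$, $p(1,1)=0$, and $\gamma=1$, so that both marginals equal $(3/4,1/4)$. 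A direct computation gives
\begin{equation*}
  \mathcal{D}_{1}\!\left(p\,\Big|\Big|\,p_{X_1}p_{X_2}\right)=\log\tfrac{10}{9}>0,
  \qquad
  \mathrm{TC}_{1}(\bm X^2)=2\log\tfrac{8}{5}-\log\tfrac{8}{3}=\log\tfrac{24}{25}<0,
\end{equation*}
so the two sides differ; they coincide only at $\gamma=0$ (Shannon) or for special $p$ such as product distributions. No proof strategy can close this gap: the statement needs either a restriction to $\gamma=0$ or a reinterpretation of one side, e.g.\ replacing $\prod_k p_{X_k}$ by $\tilde{p}^{(1)}$ \emph{and} replacing $\sum_i H_{\gamma}(X^i)$ by $H_{\gamma}\big(\tilde{p}^{(1)}\big)$ in the definition of $\mathrm{TC}_{\gamma}$, after which it reduces to Lemma~\ref{lemma:telescopic} with $k=n$, $l=1$.
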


Using this lemma, we can put forward our main result.

\begin{theorem}\label{teo:2}
Consider $p\in\mathcal{M}$ and an orthogonal foliation $\{\tM_k\{p\},\tE_k\}$ such that $\tE_0 = \{u\}$. Then,
\begin{equation}
    \tilde{p}^{(k)} = \argmax_{q\in \tM_k\{p\}} H_{\gamma}(q)~.
\end{equation}
Additionally, the R\'enyi negentropy can be decomposed as
\begin{equation}\label{eq:negent_dec}
    \pazocal{N}_{\gamma}(p) = \sum_{k=1}^N \Delta^{(k)} H_{\gamma}(p)~,
\end{equation}
with $\Delta^{(k)} H_{\gamma}(p):= H_{\gamma}\big(\tilde{p}^{(k-1)}\big)
- 
H_{\gamma}\big(\tilde{p}^{(k)}\big) >0$ quantifying the relevance of the $k$-th order constraints.
\end{theorem}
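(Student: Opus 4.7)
My proof plan has two parts corresponding to the two claims of the theorem.

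For the first claim, I would use the Pythagorean property of Lemma~\ref{lemma:orto} together with the characterization $\{\tilde{p}^{(k)}\} = \tE_k \cap \tM_{k}\{p\}$ from Lemma~\ref{lemma:inc}. The key observation is that since $\tE_0 = \{u\} \subset \tE_k$ for all $k \geq 0$, the uniform distribution $u$ lies in $\tE_k$. Then for any $q \in \tM_k\{p\}$, I can apply Lemma~\ref{lemma:orto} with the triple $(q,\tilde{p}^{(k)},u)$, where $q$ plays the role of the first argument, $\tilde{p}^{(k)} \in \tM_k\{p\} \cap \tE_k$ plays the middle role, and $u \in \tE_k$ plays the role of the final argument. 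This yields
\begin{equation}
\mathcal{D}_{\gamma}(q\|u) = \mathcal{D}_{\gamma}(q\|\tilde{p}^{(k)}) + \mathcal{D}_{\gamma}(\tilde{p}^{(k)}\|u).
\end{equation}
Since $\mathcal{D}_{\gamma}(q\|\tilde{p}^{(k)}) \geq 0$, I obtain $\mathcal{D}_{\gamma}(q\|u) \geq \mathcal{D}_{\gamma}(\tilde{p}^{(k)}\|u)$. Invoking the identity $\mathcal{D}_{\gamma}(q\|u) = \Lambda - H_{\gamma}(q)$ from Corollary~\ref{corr:simple} (equivalently, a direct computation from the R\'enyi divergence against the uniform) converts this into $H_{\gamma}(q) \leq H_{\gamma}(\tilde{p}^{(k)})$, which proves $\tilde{p}^{(k)}$ is the maximizer.

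For the second claim, I would combine Theorem~\ref{th:main} with Lemma~\ref{lemma:telescopic}. Applying Lemma~\ref{lemma:telescopic} with $l = k-1$ gives
\begin{equation}
\mathcal{D}_{\gamma}(\tilde{p}^{(k)}\|\tilde{p}^{(k-1)}) = H_{\gamma}(\tilde{p}^{(k-1)}) - H_{\gamma}(\tilde{p}^{(k)}) = \Delta^{(k)} H_{\gamma}(p).
\end{equation}
Substituting into the telescoping decomposition $\mathcal{D}_{\gamma}(p\|u) = \sum_{k=1}^{n}\mathcal{D}_{\gamma}(\tilde{p}^{(k)}\|\tilde{p}^{(k-1)})$ of Theorem~\ref{th:main}, and identifying the left-hand side with $\pazocal{N}_{\gamma}(p)$ via Corollary~\ref{corr:simple}, yields Eq.~\eqref{eq:negent_dec}. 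Positivity of each $\Delta^{(k)}H_{\gamma}(p)$ follows immediately from the non-negativity of $\mathcal{D}_{\gamma}$, with strict positivity holding whenever the consecutive projections $\tilde{p}^{(k)}$ and $\tilde{p}^{(k-1)}$ are distinct (i.e., whenever the $k$-th order constraint genuinely refines the foliation).

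I do not expect a substantial obstacle here, since essentially every ingredient is already packaged by the preceding lemmas. The only subtle point is ensuring that $u$ truly belongs to $\tE_k$ for all $k$, which is guaranteed by the monotonicity of the filtration in Eq.~\eqref{eq:foliation} together with the assumption $\tE_0 = \{u\}$; I would state this explicitly at the start of the proof to make the application of Lemma~\ref{lemma:orto} transparent.
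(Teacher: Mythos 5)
Your proposal is correct and follows essentially the same route as the paper: the first claim via Lemma~\ref{lemma:orto} applied to the triple $(q,\tilde{p}^{(k)},u)$ together with $\mathcal{D}_{\gamma}(q\|u)=\Lambda-H_{\gamma}(q)$, and the second via Theorem~\ref{th:main}, Corollary~\ref{corr:simple}, and Lemma~\ref{lemma:telescopic}. Your added caveat that $\Delta^{(k)}H_{\gamma}(p)>0$ holds \emph{strictly} only when $\tilde{p}^{(k)}\neq\tilde{p}^{(k-1)}$ is a fair refinement that the paper's own proof glosses over.
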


\begin{proof}
Because $\tilde{p}^{(k)}\in\tM_k$ (see Lemma~\ref{lemma:inc}), then thanks to Lemma~\ref{lemma:orto} any $r\in \tM_k$ satisfies 
\begin{equation}
\mathcal{D}_{\gamma}(r||u)
=
\mathcal{D}_{\gamma}(r||\tilde{p}^{(k)})
+ 
\mathcal{D}_{\gamma}(\tilde{p}^{(k)}||u)~.
\end{equation}
Therefore, $\mathcal{D}_{\gamma}(r||u)\geq \mathcal{D}_{\gamma}(\tilde{p}^{(k)}||u)$ for all $r\in\tM_k$, and hence it follows that
\begin{equation}
    \tilde{p}^{(k)}
    =
    \argmin_{q\in\tM_k}
    \mathcal{D}_{\gamma}(q||u)
    = \argmax_{q\in\tM_k}
    H_{\gamma}(q)~.
\end{equation}
Above, the first equality is due to the fact that $\tilde{p}^{(k)}\in\tM_k$, and the second equality uses the fact that $\mathcal{D}_{\gamma}(q||u) = \Lambda - H_{\gamma}\big(q\big)$.

To prove Eq.~\eqref{eq:negent_dec}, one can use Corollary~\ref{corr:simple} and Theorem~\ref{th:main} to show that
\begin{equation}
    \pazocal{N}_{\gamma}(p) =
    \mathcal{D}_{\gamma}(p||u)
    =\sum_{k=1}^N \mathcal{D}_{\gamma}(\tilde{p}^{(k)}||\tilde{p}^{(k-1)})~.
\end{equation}
The desired result is then a consequence of Lemma~\ref{lemma:telescopic}.
\end{proof}

Above, $\Delta^{(k)} H_{\gamma}(p)$ accounts for the relevance of the $k$-th order interactions. In particular, the first order term accounts for all the non-interactive part:
\begin{equation}
    \Delta^{(1)} H_{\gamma}(p) 
    = \sum_{j=1}^N
     \pazocal{N}_{\gamma}(X_j) 
     = \sum_{j=1}^N 
     \Big( \log n - H_{\gamma}\big(X_j\big)
     \Big)
\end{equation}
with $\pazocal{N}_{\gamma}(X_j)$ being the marginal negentropy of $X_j$. The remaining terms can be seen to be equal to
\begin{equation}
    \sum_{k=2}^N \Delta^{(k)} H_{\gamma}(p) 
    = \text{TC}_{\gamma}(p)
\end{equation}
showing that the $\text{TC}_\gamma$ captures all the correlated part of the R\'enyi negentropy, following the relationship observed in Shannon's case for $\gamma=0$~(as discussed in Ref.\cite{rosas2019quantifying}).

\subsection{Maximum R\'enyi entropy distributions over constraints on average observables}

Let us now consider a collection of observables $h$ over a system of $n$ binary variables defined as 
\begin{equation}
  h^{i,k}(x) = \prod_{j=1}^k x_{I_i^k(j)}~,
\end{equation}
with $h^{i,k}$ being the $i$-th observable of $k$-th order, with $I_i^k(j)$ being an appropriate assignment of indices. Then, one can define the following coordinates:
\begin{equation}
    \nu^{i,k} := \mathbb{E}\{h^{i,k}(x)\}~.
\end{equation}
For example, $\nu^{i,1}$ are of the form $\mathbb{E}\{x_i\}$ and $\nu^{j,2}$ of the form $\mathbb{E}\{x_r x_s\}$. Importantly, given that $x_1,\dots,x_n$ are binary variable then one can check that, once $\nu^{i,l}$ for all $i$ and $l\leq k$ are fixed, this in turn determines all the $k$-th order marginals~\footnote{The $k$-th order marginals of $p$ are the distributions considering $k$ of the $n$ variables that compose $x$, which are obtained by marginalising the other $n-k$ variables.}. Crucially, this implies that the parameters $\nu$ as a whole determine a unique distribution $p_\nu(x)$, and hence $\nu$ is a valid parametrization of the corresponding statistical manifold~\cite{bialek2012statistical,rosas2016understanding}.

Let us now consider the family of sets $\tM_k$, as defined in Eq.\eqref{eq:M} associated to this parametrization. According to the previous discussion, $\tM_k\{p\}$ is the set of all distributions for $x$ that are compatible with the $k$-th order marginals. For determining the form of the corresponding $k$-th order $\gamma$-projection, we use the following lemma.

\begin{lemma}\label{lemma:asdggh}
The solution of the optimization problem
\begin{equation}\label{eq:optimis}
    \argmax_{q\in\mathcal{M}} H_{\gamma}(q)
    \qquad \text{s.t.}
    \quad
    \nu^{i,l} = \mathbb{E}_q\{h^{i,l}(x)\}
\end{equation}
for all $i$ and $l\leq k$ gives a projection of the form
\begin{equation}\label{eq:maxent_dist}
    \tilde{p}_\theta^{(k)}(x) = e^{-z_\gamma(\theta)} \big(1 + \gamma \theta \cdot h(x) \big)^{1/\gamma} ~,
\end{equation}
with $\theta^{i,l}=0$ for all $l>k$, and a normalization factor given by $z_\gamma(\theta) = \frac{1}{\gamma} \log \sum_{x} \big(1 + \gamma\theta\cdot h(x) \big)^{1/\gamma}$.
\end{lemma}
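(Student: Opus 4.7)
The plan is to treat this as a constrained optimization via Lagrange multipliers, while relying on Theorem~\ref{teo:2} to certify that any stationary point of the expected form is in fact the global maximum. By Theorem~\ref{teo:2}, the maximizer of $H_\gamma$ over $\tM_k\{p\}$ exists and coincides with $\tilde{p}^{(k)}$, so the task reduces to exhibiting its functional form and verifying that the natural parameters $\theta^{i,l}$ vanish for all $l > k$.

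First I would form the Lagrangian
\[
L[q] = -\frac{1}{\gamma}\log\sum_x q(x)^{\gamma+1} - \mu\Bigl(\sum_x q(x) - 1\Bigr) - \sum_{l\le k,\, i}\lambda^{i,l}\Bigl(\sum_x q(x)\, h^{i,l}(x) - \nu^{i,l}\Bigr),
\]
in which the multipliers $\lambda^{i,l}$ appear only for orders $l \le k$. Imposing $\partial L/\partial q(x) = 0$ yields
\[
-\frac{\gamma+1}{\gamma}\,\frac{q(x)^{\gamma}}{\sum_y q(y)^{\gamma+1}} \;=\; \mu + \sum_{l \le k,\, i} \lambda^{i,l} h^{i,l}(x),
\]
so that $q(x)^\gamma$ is an affine function of $h(x)$. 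Absorbing the positive factor $\sum_y q(y)^{\gamma+1}$ and the additive constant $\mu$ into a rescaled parameter vector $\theta$ (with $\theta^{i,l} = 0$ for all $l > k$, inherited directly from the absence of those Lagrange terms), the stationarity condition rearranges into $q(x)^\gamma = C\bigl(1 + \gamma\,\theta\cdot h(x)\bigr)$ for some positive constant $C$. Taking the $1/\gamma$ power and enforcing $\sum_x q(x) = 1$ identifies $C^{1/\gamma} = e^{-z_\gamma(\theta)}$ with $z_\gamma$ as stated, delivering the advertised form for $\tilde{p}^{(k)}_\theta$.

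To confirm that $\theta$ can actually be chosen so that this candidate satisfies the $k$-th order constraints $\nu^{i,l}$, I would appeal to the Fenchel--Legendre-like duality of Section~\ref{sec:IIc}: Eq.~\eqref{eq:grad_EV} together with Eqs.~\eqref{eq:dual_var0}--\eqref{eq:dual_var1} establishes a bijection between the natural coordinates $\theta$ and the conjugate moment coordinates $\eta$ within the deformed exponential family of Eq.~\eqref{eq:def_exp}, guaranteeing existence and uniqueness of a $\theta$ realizing any admissible set of moments. As a cross-check, the vanishing of $\theta^{i,l}$ for $l > k$ reproduces the mixed-coordinate characterization $(\eta_{k^-}; 0)$ of $\tilde{p}^{(k)}$ noted right after Lemma~\ref{lemma:inc}.

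The principal obstacle is that for $\gamma > 0$ the R\'enyi entropy is not globally concave in $q$, so Lagrangian stationarity by itself cannot certify a global maximum. I would bypass this entirely by invoking Theorem~\ref{teo:2}, which already establishes that $\argmax_{q\in \tM_k\{p\}} H_\gamma(q) = \tilde{p}^{(k)}$ exists and is unique; since the Lagrange analysis above produces a unique candidate of the prescribed deformed-exponential form, the two must coincide, simultaneously confirming both the functional form of $\tilde{p}^{(k)}_\theta$ and the property that $\theta^{i,l} = 0$ for $l > k$.
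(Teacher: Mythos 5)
Your proposal is correct and follows essentially the same route as the paper's proof: invoke Theorem~\ref{teo:2} to identify the constrained maximizer with the $\gamma$-projection, then extract its functional form from the stationarity conditions of a Lagrangian with multipliers for normalization and the moment constraints of order $l\le k$. You simply carry out the steps the paper leaves implicit (the explicit derivative of $H_\gamma$, the absorption of constants into $\theta$, and the remark that non-concavity of $H_\gamma$ for $\gamma>0$ is why Theorem~\ref{teo:2} is needed to certify global optimality), which is a faithful and somewhat more careful rendering of the same argument.
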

\begin{proof}
Using Theorem~\ref{teo:2}, it is clear that $\tilde{p}_\theta^{(k)}$ can be found by solving the extreme values of a Lagrangean of the form
\begin{align}
    L(q, \theta_0, \{\theta_j\}) =& H_{\gamma}(q) + \theta_0 \Big(\sum_i q_i -1\Big) \nonumber \\
    &+ \sum_j \theta_j \Big( \sum_k q_k F_j(x_k) - \nu_j \Big)~,
\end{align}
where $q$ is a discrete distribution and $\theta_j$ are Lagrange multipliers. 
The desired result follows from imposing $\partial L/\partial q_i = 0$ and $\partial L/\partial \theta_j = 0$.
\end{proof}
Efficient numerical methods to estimate distributions of the form specified by Eq.~\eqref{eq:maxent_dist} will be developed in a separate publication.

\section{Conclusion}\label{sec:discussion}

This paper shows how the non-Euclidean geometry of curved statistical manifolds naturally leads to a MEP that uses the R\'enyi entropy, generalising the traditional MEP framework based on Shannon's --- which take place on flat manifolds. This generalization of the MEP has three important consequences:
\begin{itemize}
    \item It highlights special geometrical properties of the R\'enyi entropy, which make it stand apart from other generalized entropies.
    \item It provides a solid mathematical foundation for the numerous applications of the R\'enyi entropy and divergence.
    \item It enables a range of novel methods of analysis for the statistics of complex systems.
\end{itemize}

R\'{e}nyi's entropy and divergence represent one of many routes by which the classic information-theoretic definitions can be extended. One fundamental feature of the R\'enyi divergence --- that this work thoroughly exploits --- is the correspondence that it establishes between orthogonality with respect to Fisher's metric and a Pythagorean relationship in the divergence (which does not hold in the geometry induced by e.g. the $\alpha$-divergence). This correspondence is the key property that allows us to build hierarchical foliations, despite the fact that in curved manifolds the link between geometric and Fenchel-Legendre duality is generally broken. It is relevant to highlight that the correspondence between orthogonality and the Pythagorean relationship is not guaranteed by other divergences such as the $\alpha$-divergence, which makes entropies such as Tsallis'~\footnote{For an explanation of the close relationship between the Tsallis entropy and the  $\alpha$-divergence, please see~\cite{Tsallis-alphadiv}.} not well suited to extend the MEP --- at least from an information geometry perspective~\footnote{For an interesting related discussion, including thermodynamic aspects, see Ref.~\cite{scarfone2020study}}. 
Considering that extensions of the Renyi entropy exist (e.g. Ref.~\cite{de2016geometry}), 
an interesting open question is to determine the range of divergences that satisfy these properties.

These findings are in agreement with recent research that is revealing special features of the R\'enyi entropy and divergence in the context of statistical inference and learning. In particular, Refs.~\cite{esposito2019generalization,esposito2020robust} show that the R\'enyi divergence can provide bounds to the generalization error of supervized learning algorithms. Also, Ref.~\cite{jizba2019maximum} shows that the R\'enyi entropy belongs to a class of functionals that are particularly well-suited for inference and estimation. Put together, these findings suggest that the R\'enyi entropy and divergence might be capable of playing an important role in the development of future data analysis and artificial intelligence frameworks.

This work opens the door to novel data-analyses approaches to study high-order interactions. While commonly neglected, high-order statistics have recently been proven to be instrumental in a wide range of phenomena at the heart of complex systems, including the self-organising capabilities of cellular automata~\citep{rosas2018information}, gene-to-gene information flow~\cite{cang2020inferring}, neural information processing~\citep{wibral2017partial}, high-order brain functions~\citep{luppi2020synergistic1,luppi2020synergistic2}, and emergent phenomena~\citep{rosas2020reconciling,varley2021emergence}. However, exhaustive modeling of high-order effects requires an exponential number of parameters; for that reason, practical investigations need to rely on heuristic modeling methods (see e.g.~\cite{ganmor2011sparse,shimazaki2015simultaneous}). In contrast, our framework allow us to do projections while optimising the manifold's curvature in order to best match empirical statistics. Importantly, $k$-th order projections on curved spaces lead to  distributions that capture statistical phenomena of order higher than $k$ without increasing the dimensionality of the parametric family. The development of this line of research is part of our future work.

Another set of promising applications is found in condensed matter systems, where the R\'enyi entropy is often introduced as a measure of the degree of quantum entanglement. In particular, the R\'enyi entropy results from an heuristic generalization of the Von Neumann entropy, which has important benefits in being (i) more suitable to numerical simulations~\cite{hastings2010measuring} and (ii) being easier to measure by experiments~\cite{islam2015measuring}. In particular, the R\'enyi entropy has been shown to be sensible to features of quantum systems such as central charge~\cite{GeometricMutInf}, and knowledge of it at all orders encodes the whole entanglement spectrum of a quantum state~\cite{ShannonRenyiQuantumSpin}.
Moreover, in strongly coupled systems, R\'enyi entropies have been essential for establishing a connection between quantum entanglement and gravity~\cite{Dong:2016fnf,Barrella:2013wja}. More recently, the R\'enyi mutual information has been taking a central role in the identification of phase  transitions~\cite{GeometricMutInf,DetectingPhaseTwithRenyi,PhysRevLett.107.020402}. 
The mathematical framework established in this work serves as a solid basis for these investigations, and further allows the exploration of novel application of information geometry tools in these scenarios.

It is our hope that this contribution may serve to widen the range of applicability of the MEP, while fostering theoretical and practical investigations related to the properties of curved statistical manifolds.

\begin{acknowledgments}
The authors thank Shunichi Amari for careful reading of the manuscript and a number of insightful suggestions, and Ryota Kanai and Yike Guo for supporting this research.
F.E.R. is supported by the Ad Astra Chandaria foundation.
\end{acknowledgments}

\appendix

\section{Deformed exponential family distributions} 
\label{app:def_exp_dist}

For completeness, this appendix presents a derivation of the functional form of $\tilde{p}_\xi$ as presented by Eq.~\eqref{eq:def_exp} that follows Ref.~\cite[Sec.~4.1]{wong2018logarithmic}. 
For this, let us consider an ``exponentially-flat'' manifold~\footnote{A direct calculation shows that parametrizations based on exponential family distributions generate a flat connection.}, i.e. a manifold $\mathcal{M}$ with a parametrization $\xi$ such that all $p\in\mathcal{M}$ can be expressed as
\begin{equation}\label{eq:exp_fam}
  p_\xi(x) = e^{-\xi \cdot h(x) + \phi (\xi)},
\end{equation}
where $h(x)$ is a vector of sufficient statistics of $x$, and $-\phi(\xi)$ is the cumulant generating function. Note that this ``natural parametrization'' of $\mathcal{M}$ allows to express the corresponding contrast function of the KL, $\pazocal{D}_\text{KL}[\xi;\xi']:=\pazocal{D}_\text{KL}(p_\xi||p_{\xi'})$, as a Bregman divergence:
\begin{equation}
    \pazocal{D}_\text{KL}[\xi;\xi']
    =
    (\xi-\xi')\cdot\eta - \phi(\xi) + \phi(\xi')~.
\end{equation}

To find a ``deformed'' exponential distribution $\tilde{p}\in\mathcal{M}$, 
one needs to find the natural parametrization of $\mathcal{M}$ that allows to express the R\'enyi entropy as a Bregman-like divergence. For this purpose, 
one can rewrite Eq.~\eqref{eq:exp_fam} in its self-dual form to find
\begin{equation}\label{eq:KL_exponential}
  \log p_\xi(x) = -\pazocal{D}_\text{KL}[\xi:\xi'] - \psi \big(h(x)\big),
\end{equation}
with $\psi$ the conjugate of $\phi$, and $h(x)$ plays the role of the dual variable $\eta'$. Then, one can re-write Eq.~\eqref{eq:KL_exponential} replacing $\pazocal{D}_\text{KL}$ with $\mathcal{D}_{\gamma}$, and use Eq.~\eqref{eq:def_bigd} to obtain
\begin{align}\label{eq:exp_def_f}
  \log \tilde{p}_\xi(x) 
  &= -\mathcal{D}_{\gamma}[\xi:\xi'] - \psi_{\gamma} (h(x)) \\
  &= -\frac{1}{\gamma} \log\big(1 + \gamma \xi\cdot h(x)\big) + \varphi_{\gamma} (\xi)~,
\end{align}
which leads to
\begin{equation}
  \tilde{p}_\xi(x) = \big(1+ \gamma \xi \cdot h(x)\big)^{-\frac{1}{\gamma}} e^{\varphi_{\gamma}(\xi)}
\end{equation}
with a normalising potential given by Eq.~\eqref{eq:def_varphi}. Importantly, one can show that~\cite[Th.13]{wong2018logarithmic}
\begin{equation}
    \mathcal{D}_\gamma(\tilde{p}_\xi || \tilde{p}_{\xi'})
    = \mathcal{D}_\gamma[\xi;\xi']~,
\end{equation}
which confirms that the parametrization of $\mathcal{M}$ determined by Eq.~\eqref{eq:def_exp} is the natural (in the Bregman-like sense) parametrization of the deformed geometry induced by $\mathcal{D}_\gamma$.

\section{Analysis of deformed expectation values}
\label{app:special_cases}

The deformed expectation values given by Eq.~\eqref{eq:dual_esp} are non-trivial to interpret, and their explicit dependence on $\xi$ makes numerical simulation challenging. However, exploring some ranges of values of $\gamma$ can help us to flesh out an interpretation for $\eta$. 

To this end, let us start by considering the Taylor series expansion of the $Z_\xi$ field given by
\begin{equation}
    Z_{\xi}(h) = h(X)\sum_{n=0}^{\infty} (-1)^{n}(\gamma \xi \cdot h(X))^{n}~.
\end{equation}
Small values of $\gamma$ ensure convergence of the series. Now, one may write its expectation value as
\begin{align}
    \mathbb{E}_{\xi}\{Z_{\xi}^{i}(h)\}  \simeq &\;\mathbb{E}_{\xi}\{ h^{i} \} - \gamma \xi^j  \mathbb{E}_{\xi}\{h^i h_j \} \nonumber \\ & + \gamma^2 \xi^j\xi^k \mathbb{E}_{\xi}\{h^{i}h_{j}h_{k} \}~,
\end{align}
where we have retained up to second order corrections. Similarly for $\eta$, one can find that
\begin{align}
\label{eq:expansion}
    \eta_i & \simeq \mathbb{E}_{\xi}\{h_i\} - \xi^j (\mathbb{E}_{\xi}\{h_i h_j\} +\mathbb{E}_{\xi}\{h_j\}\mathbb{E}_{\xi}\{h_i\})\gamma \nonumber \\
    & \quad + \xi^j \xi^k (\mathbb{E}_{\xi}\{h_i h_j h_k\} + \mathbb{E}_{\xi}\{h_j h_k \}\mathbb{E}_{\xi}\{h_i\} \nonumber \\
    & \qquad \qquad + \mathbb{E}_{\xi}\{h_j \}\mathbb{E}_{\xi}\{h_k h_i\} \nonumber \\
    & \qquad \qquad + \mathbb{E}_{\xi}\{ h_j\} \mathbb{E}_{\xi}\{h_k\}\mathbb{E}_{\xi}\{h_i\})\gamma^2~.
\end{align}
This implies that these Bregman-like dual coordinate generally deviates from the one obtained for $\gamma=0$ through higher orders moments, which becomes more prominent as one increases the order of its $\gamma$-expansion.

\section{Pythagorean relation}
\label{app:pythagorean}

This appendix provides a proof for Lemma~\ref{lemma:orto}, which follows results presented in Ref.~\cite{wong2018logarithmic}.
\begin{proof}
Let's consider a primal geodesic connecting $p$ and $q$ with coordinates $\xi$ and a dual geodesic connecting $r$ and $q$ with coordinates $\eta$. The geodesics are then proportional to $\xi^i_r-\xi^i_q$ and $\eta_{p,j}-\eta_{q,j}$ respectively. Then, let's define
\begin{align}\label{eq:AB}
    A &= \sum_i (\xi^i_r-\xi^i_q)\partial_{\xi^i}~, \\
    B &= \sum_j (\eta_{p,j}-\eta_{q,j})\partial_{\eta_j}~,
\end{align}
and take a look of their inner product
\begin{align}
    \langle A,B \rangle &= \Big \langle \sum_{i}(\xi_r^i-\xi_q^i)\partial_{\xi^i}, \sum_{j}(\eta_q^j-\eta_r^j)\partial_{\eta_j} \Big \rangle \\
    &= \sum_{i,j}(\xi_{r}^{i} - \xi_{q}^{i})(\eta_{p,j} - \eta_{q,j}) \langle \partial_{\xi^i},\partial_{\eta_j}\rangle~.\label{eq:innerprodAB}
\end{align}
In other words, we rely on the evaluation of~\eqref{eq:innerprodAB}, which requires that the inner product of the primal and dual bases induced by the divergence~\eqref{eq:Renyi_div}, vanish. That is,
\begin{align}
    \langle \partial_{\xi^i},\partial_{\eta_j} \rangle & = \Big \langle \partial_{\xi^i},\sum_m \partial_{\eta_j}\xi^m \partial_{\xi^m} \Big \rangle \\ \label{eq:r_metric}
    & = \sum_{m} \partial_{\eta_j}\xi^m  \langle \partial_{\xi^i},\partial_{\xi^m} \rangle \vphantom{\int}~,
\end{align}
whose intern product can be directly obtained from the divergence as
\begin{align}
    & \tilde{g}_{im}(\xi) = -\partial_i \partial_{m'} \mathcal{D}_{\gamma}[\xi,\xi']|_{\xi' =\xi} \\
    & = \left. \left \{ \frac{-\partial_{\xi^{\prime m}}\eta'_{i}}{\Pi (\xi,\eta')} + \sum_{l} \frac{\gamma\, \eta'_{i}\xi^{l}}{\Pi (\xi,\eta')^2}\partial_{\xi^{\prime m}}\eta'_{l} \right \} \right|_{\xi' = \xi}~,
\end{align}
where we use the shorthand notation $\Pi(\xi,\eta'):= (1+ \gamma \xi \cdot \eta')$. Replacing this expression into~\eqref{eq:r_metric} yields
\begin{equation} \label{eq:inner_prod}
    \langle \partial_{\xi^i},\partial_{\eta_j} \rangle = \frac{-1}{\Pi (\xi,\eta)}\delta_{i}^j +\frac{\alpha}{\Pi (\xi,\eta)^2}\eta_{i}\xi^{j}~.
\end{equation}
Using this in Eq.~\eqref{eq:innerprodAB}, and adopting $\Pi_q :=\Pi(\xi_q,\eta_q)$ for brevity, one finds that
\begin{equation}
    \langle A,B \rangle = \sum_{i,j} (\xi_{r}^{i} - \xi_{q}^{i})(\eta_{p,j} - \eta_{q,j}) \left( \frac{-1}{\Pi_{q}}\delta_{i}^j -\frac{\alpha}{\Pi_{q}^2}\eta_{q,i}\xi_{q}^{j} \right)~.
\end{equation}
Evaluating the sum, one finds that this expression is proportional to
\begin{equation} \label{eq:condition}
    \Pi_{q} (\xi_{r} - \xi_{q}) \cdot (\eta_{p} - \eta_{q}) 
    + \alpha \xi_{q} \cdot (\eta_{p} - \eta_{q})\eta_q \cdot (\xi_{r} - \xi_{q})
\end{equation}
Finally, the Pythagorean relationship in Eq.~\eqref{eq:Pyth_rel} holds
\begin{align}
    \iff (1 + \gamma \xi_{q} \cdot \eta_{p})(1 + \gamma \xi_r \cdot \eta_q) & = \nonumber \\
    (1 + \gamma \xi_r \cdot \eta_p)& (1 + \gamma \xi_q \cdot \eta_q) \\
    \iff (\xi_{r} - \xi_{q}) \cdot (\eta_{p} - \eta_{q}) & = \nonumber \\ \label{eq:pyth_cond}
    \gamma (\xi_q \cdot \eta_p) (\xi_r \cdot \eta_q) - \gamma & (\xi_r \cdot \eta_p)(\xi_q \cdot \eta_q)
\end{align}
as it can be seen directly from its logarithmic dependence and the Fenchel-Lengendre relation for the scalar potentials on point $q$. Since the primal geodesic and its dual are orthogonal at $q$, this~\eqref{eq:condition} must vanish resulting in~\eqref{eq:pyth_cond}, hence the Pythagorean relation holds.
\end{proof}

\bibliography{references}
\end{document}